\DeclareMathAlphabet\mathbfcal{OMS}{cmsy}{b}{n}
\newcommand\hide[1]{}
\newcommand*\PR{\mathds{P}}
\newcommand*\OO{\mathcal{O}}
\newcommand*\EXP{\mathds{E}}
\newcommand{\bs}{\boldsymbol}
\shorttitle{Conditions for indexability of restless bandits and an algorithm to compute Whittle index}
\begin{document}

\title{Conditions for indexability of restless bandits and an $\OO(K^3)$ algorithm to compute Whittle index}

\authorone[McGill University]{Nima Akbarzadeh}%
\authorone[McGill University]{Aditya Mahajan}%
\addressone{Department of Electrical and Computer Engineering, McGill
University, 3480 Rue University, Montréal, QC H3A\,0E9.}
\footnote{This research was funded in part by the Innovation for Defence Excellence and Security (IDEaS) Program of the Canadian Department of National Defence through grant CFPMN2-037, and Fonds de Recherche du Quebec-Nature et technologies (FRQNT).}
\setcounter{footnote}{0}
\renewcommand\thefootnote{\arabic{footnote}}

\begin{abstract}
  Restless bandits are a class of sequential resource allocation problems
  concerned with allocating one or more resources among several alternative
  processes where the evolution of the process depends on the resource allocated to
  them. Such models capture the fundamental trade-offs between exploration and
  exploitation. In 1988, Whittle developed an index heuristic for restless
  bandit problems which has emerged as a popular solution approach due to its
  simplicity and strong empirical performance. The Whittle index heuristic is
  applicable if the model satisfies a technical condition known as
  indexability. In this paper, we present two general sufficient conditions
  for indexability and identify simpler to verify refinements of these
  conditions. We then revisit a previously proposed algorithm called adaptive greedy algorithm which is known to compute the Whittle index for a subclass of restless bandits. We show that a generalization of the adaptive greedy algorithm computes the Whittle index for all indexable restless bandits. \textcolor{black}{We present an efficient implementation of this algorithm which can compute the Whittle index of a restless bandit with $K$ states in $\OO(K^3)$ computations.} Finally, we present a detailed
  numerical study which affirms the strong performance of the Whittle index
  heuristic.
\end{abstract}

\keywords{Multi-armed bandits; restless bandits; Whittle index; indexability;
stochastic scheduling; resource allocation}

\ams{90C40}{90C39;49M20;91B32} 

\section{Introduction}

Restless bandits are a class of sequential resource allocation problems
concerned with allocating one or more resources among several alternative
processes where the evolution of the process depends on the resource allocated
to them. Such models arise in various applications such as machine
maintenance~\cite{glazebrook2005index}, congestion
control~\cite{avrachenkov2013congestion}, healthcare~\cite{deo2013improving}, finance~\cite{glazebrook2013monotone}, channel
scheduling~\cite{liu2010indexability}, smart grid~\cite{abad2016near}, and
others. 

Restless bandits are a generalization of classical multi-armed
bandits~\cite{gittins2011multi}, where the processes remain frozen when
resources are not allocated to them. Gittins~\cite{gittins1979bandit} showed
that when a single resource is to be allocated among multiple processes, the
optimal policy has a simple structure: compute an index for each process and
allocate the resource to the process with the largest (or the lowest) index.
In contrast, the general restless bandit problem is
\textsc{pspace}-hard~\cite{papadimitriou1999complexity}.
Whittle~\cite{whittle1988restless} showed that index-based policies are
optimal for the Lagrangian relaxation of the restless bandit problem and
argued that the corresponding index, now called Whittle index, is a reasonable
heuristic for restless bandit problems. Subsequently, it has been shown that
the Whittle index heuristic is optimal under some
conditions~\cite{weber1990index,lott2000optimality} and performs well in
practice~\cite{ansell2003whittle,glazebrook2006some,glazebrook2002index}.

The Whittle index heuristic is applicable if a technical condition known as
\emph{indexability} is satisfied. 
Sufficient conditions for indexability have
been investigated under specific modeling assumptions: two state \hide{fully or partially observed }restless bandits~\cite{liu2010indexability,avrachenkov2013congestion}; monotone bandits~\cite{glazebrook2013monotone,
ansell2003whittle, avrachenkov2013congestion}; models with right-skip
free transitions~\cite{glazebrook2005index, glazebrook2006some}; models
with monotone or convex cost/reward~\cite{glazebrook2006some,
ansell2003whittle, avrachenkov2013congestion, archibald2009indexability,
yu2018deadline, ayesta2010modeling}; models satisfying partial
conservation laws~\cite{nino2001restless,nino2007dynamic}; and models arising in specific applications~\cite{glazebrook2002index, glazebrook2005index, glazebrook2006some, archibald2009indexability, ayesta2010modeling}. 

\textcolor{black}{Nino-Mora~\cite{ninomora2006, nino2007dynamic} proposed a generalization of Whittle index called marginal productivity index (MPI) for resource allocation problems where processes can be allocated fractional resources. In \cite{nino2007dynamic}, he also proposed an algorithm called the adaptive greedy algorithm, to compute the MPI when the model satisfies a technical condition called partial conservation laws (PCL). For restless bandits which satisfy the PCL condition, the Whittle index can be computed using the adaptive greedy algorithm. However, for general restless bandits, there are no known efficient algorithms to \textcolor{black}{exactly} compute the Whittle index. \textcolor{black}{It is possible to \emph{approximately}} compute the Whittle index by conducting a binary search over penalty for active action (or a subsidy for passive action)~\cite{qian2016restless,akbarzadeh2019dynamic} but such a binary search is computationally expensive because each step of the binary search requires solving a dynamic program.}

\textcolor{black}{In this paper, we revisit the restless bandit problem and
present three contributions. Our first contribution is to provide general
sufficient conditions for indexability which are based on an alternative characterization
of the passive set. We also present easy to verify refinements of these
sufficient conditions.}

\textcolor{black}{Our second contribution is to use a novel geometric interpretation of Whittle index to show that a refinement of the adaptive greedy algorithm proposed by Nino-Mora~\cite{nino2007dynamic} computes the Whittle index for all indexable restless bandits.} 
\textcolor{black}{We provide a computationally efficient implementation, which computes the Whittle indices of a restless bandit with $K$ states in $\OO(K^3)$ computations.}

\textcolor{black}{Our third contributions is to present three special cases:
(i)~\emph{Restless bandits with optimal threshold-based policy} which were previously studied in~\cite{ansell2003whittle, glazebrook2009index, avrachenkov2013congestion, glazebrook2013monotone, wang2019whittle, akbarzadeh2019restless}, (ii)~\emph{Stochastic monotone bandits} which may be considered as a generalization of monotone bandits~\cite{glazebrook2013monotone, ansell2003whittle, avrachenkov2013congestion}, and (iii)~\emph{Restless bandits with controlled restarts} similar to~\cite{akbarzadeh2019restless, wang2019whittle}, which is a generalizations of the restart models~\cite{glazebrook2005index, glazebrook2006some}. We show that these models are always indexable and the Whittle index can be computed in closed form.}


Finally, we present a detailed numerical study comparing the performance of the Whittle index policy with that of the optimal and myopic policies. Our study shows that in general, the performance of Whittle index policy is comparable to the optimal policy and considerably better than the myopic policy. 
\paragraph{Notation}

Uppercase letters ($X$, $Y$, etc.) denote random variables, 
lowercase letters ($x$, $y$, etc.) denote their realization, and 
script letters ($\mathcal{X}$, ${\cal Y}$, etc.) denote their
state spaces. Subscripts denote time: so, $X_t$ denotes a system variable at
time $t$ and $X_{1:t}$ is a short-hand for the system variables $(X_1, \dots,
X_t)$. $\PR(\cdot)$ denotes the probability of an event, $\EXP[\cdot]$ denotes
the expectation of a random variable. $\mathbb{Z}$ and $\mathbb{R}$ denote the sets of integers and real
numbers. Given a matrix~$P$, $P_{ij}$ denotes its $(i,j)$-th element. 



\section{Restless bandits: problem formulation and solution concept}
\label{sec:model}

\subsection{Restless Bandit Process}
A discrete-time restless bandit process (RB) is a controlled Markov
process~$(\mathcal{X}, \{0, 1\}, \allowbreak \{P(a)\}_{a \in \{0, 1\}}, c, x_0)$ where $\mathcal{X}$ denotes the state space which is a finite or countable set; $\{0, 1\}$ denotes the action space where the action~$0$ is called the \textit{passive} action and the action~$1$ is the \textit{active} action; $P(a)$, $a \in \{0, 1\}$, denotes the transition matrix when action~$a$ is chosen; $c: \mathcal{X} \times \{0, 1\} \to \mathbb{R}$ denotes the cost function; and $x_0$ denotes the initial state.
We use $X_t$ and $A_t$ to denote the action of the process at time~$t$. The
process evolves in a controlled Markov manner, i.e., for any realization
$x_{0:t+1}$ of $X_{0:t+1}$ and $a_{0:t+1}$ of $A_{0:t+1}$, we have 
$\PR( {X}_{t+1} = {x}_{t+1} | {X}_{0:t} = {x}_{0:t}, {A}_{0:t} = {a}_{0:t})
=
\PR ( X_{t+1} = x_{t+1} | X_{t} = x_{t}, A_{t} = a_t)$, which we denote by 
$P_{x_t x_{t+1}}(a_t)$.

\subsection{Restless Multi-armed Bandit Problem}
A restless multi-armed bandit is a collection of $n$ independent
RBs~$(\mathcal{X}^i, \{0, 1\}, \allowbreak \{P^i(a)\}_{a \in \{0, 1\}}, c^i, x^i_0)$,
$i \in {\cal N} \coloneqq \{1, \ldots, n\}$. A decision maker observes the state of
all RBs, may choose to activate only $m < n$ of them, and incurs a cost equal to the sum of the cost incurred by each RB.  

Let $\boldsymbol{\mathcal{X}} \coloneqq \prod_{i \in \cal N} \mathcal{X}^i$ and ${\mathbfcal A}(m) \coloneqq \bigl\{ {\bs a} = (a^1, \ldots, a^n) \in {\cal A}^{n} : \sum_{i \in {\cal N}} a^i = m\bigr\}$ denote the joint state space and the feasible action space, respectively. Let ${\boldsymbol X}_t \coloneqq (X^1_t, \dots X^n_t)$ and ${\boldsymbol A}_t = (A^1_t, \dots, A^n_t)$ denote the joint state and actions at time~$t$. As the RBs evolve independently, for any realization ${\boldsymbol x}_{0:t}$ of ${\boldsymbol X}_{0:t}$ and ${\boldsymbol a}_{0:t}$ of ${\boldsymbol A}_{0:t}$, we have 
$
\PR \left( {\bs X}_{t+1} = {\bs x}_{t+1} | {\bs X}_{0:t} = {\bs x}_{0:t}, {\bs A}_{0:t} = {\bs a}_{0:t} \right) = \prod_{i = 1}^{n} \PR \left( X^{i}_{t+1} = x^{i}_{t+1} | X^{i}_{t} = x^i_{t}, A^{i}_{t} = a^i_t \right).
$
When the system is in state ${\bs x}_t = (x^1_t, \ldots, x^n_t)$ and the decision-maker chooses action~${\bs a}_t = (a^1_t, \ldots, a^n_t)$, the system incurs a cost $\bar{c}({\bs x}_t, {\bs a}_t) \coloneqq \sum_{i \in {\cal N}} c^i(x^i_t, a^i_t)$.
The decision-maker chooses his actions using a time-homogeneous Markov policy
${\bs g}: {\mathbfcal X} \to {\mathbfcal A}(m)$, i.e., chooses ${\bs A}_t = {\bs g}({\bs X}_t)$.
The performance of any Markov policy ${\bs g}$ is given by 
\begin{equation*}
J^{({\bs g})}({\bs x}_0) \coloneqq (1-\beta) \EXP\biggl[ \sum_{t = 0}^{\infty} 
\beta^t \bar{c}({\bs X}_t, {\bs g}({\bs X}_t)) \bigg| {\bs X}_0 = {\bs x}_0
\biggr],
\end{equation*}
where $\beta \in (0, 1)$ is the discount factor and ${\bs x}_0$ is the initial state of the system.

We are interested in the following optimization problem.

\begin{problem}\label{prob:main}
  Given the discount factor $\beta \in (0,1)$, the total number $n$ of arms, the number $m$ of active arms, RBs~$(\mathcal{X}^i, \{0, 1\}, \{P^i(a)\}_{a \in \{0, 1\}}, c^i, x^i_0)$, $i \in {\cal N}$, and initial state~${\bs x}_0 \in {\mathbfcal X}$, choose a Markov policy ${\bs g} \colon {\mathbfcal X} \to {\mathbfcal A}(m)$ that minimizes $J^{({\bs g})}({\bs x}_0)$.
\end{problem}

Problem~\ref{prob:main} is a multi-stage stochastic control problem and one
can obtain an optimal solution using dynamic programming.
However, the dynamic programming solution is intractable for large~$n$ since
the cardinality of the state space is $\prod_{i \in {\cal N}}|\mathcal{X}^i|$,
which grows exponentially with~$n$. In the next section, we describe a
heuristic known as \textit{Whittle index} to efficiently obtain a suboptimal
solution of the problem.


\subsection{Indexability and the Whittle index}
Consider a RB~$(\mathcal{X}, \{0, 1\}, \{P(a)\}_{a \in \{0, 1\}}, c, x_0)$. For any $\lambda \in \mathbb{R}$, we consider a Markov decision process~$\{ \mathcal{X}, \{0, 1\}, \{P(a)\}_{a \in \{0, 1\}}, c_\lambda, x_0 \}$, where
\begin{equation} \label{eqn:modif_cost}
  c_\lambda(x, a) \coloneqq c(x, a) + \lambda a, 
  \quad \forall x \in \mathcal{X}, 
   \forall a \in \{0, 1\}.
\end{equation}
The parameter $\lambda$ may be viewed as a penalty for taking active action. The performance of any time-homogeneous policy $g: \mathcal{X} \to \{0, 1\}$ is 
\begin{equation}
J^{(g)}_{\lambda}(x_0) := (1-\beta) \EXP\biggl[ \sum_{t = 0}^{\infty} \beta^t c_\lambda(X_t, g(X_t)) \bigg| X_0  = x_0 \biggr]. \label{eqn:obj_func2}
\end{equation}

Consider the following optimization problem. 
\begin{problem}\label{prob:decompose}
	Given the RB~$(\mathcal{X}, \{0, 1\}, \{P(a)\}_{a \in \{0, 1\}}, c_\lambda, x_0)$ and the discount factor~$\beta \in (0,1)$, choose a Markov policy $g:
    \mathcal{X} \to \{0, 1\}$ to minimize $J_{\lambda}^{(g)}(x_0)$.
\end{problem}

Problem~\ref{prob:decompose} is also a Markov decision process and one can
obtain an optimal solution using dynamic programming. Let
$V_\lambda: \mathcal{X} \to \mathbb{R}$ be the unique fixed point of the
following: 
\begin{equation}
V_\lambda(x) =  \min \bigl\{ H_\lambda(x, 0) , H_\lambda(x, 1) \bigr\}, 
\quad\forall x \in \mathcal{X}, \label{eqn:bllmn_vf}
\end{equation}
where
\begin{equation}
  H_\lambda(x, a) =  (1-\beta) c_\lambda(x, a) + \beta \sum_{y \in \mathcal{X}} P_{xy}(a) V_\lambda(y), 
  \quad a \in \{0, 1\}. \label{eqn:def_H}
\end{equation}
Let $g_\lambda(x)$ denote the minimizer of the right hand side of $\eqref{eqn:bllmn_vf}$ where we set $g_\lambda(x) = 1$ if $H_\lambda(x, 0) = H_\lambda(x, 1)$. Then, from Markov decision theory~\cite{puterman2014markov}, we know that the time-homogeneous policy $g_\lambda$ is optimal for Problem~\ref{prob:decompose}.

Define the \emph{passive set} ${\Pi}_\lambda$ to be the set of states where passive action is optimal, i.e., 
\begin{equation} \label{eqn:pass_set}
{\Pi}_\lambda \coloneqq \left\{ x \in \mathcal{X}: g_\lambda(x) = 0 \right\}.
\end{equation}

\begin{definition}[Indexability] \label{def:indexability}
	An RB is indexable if ${\Pi}_\lambda$ is increasing in $\lambda$, i.e., for any $\lambda', \lambda'' \in \mathbb{R}$,
    \(\lambda' \leq \lambda''\) implies that \({\Pi}_{\lambda'} \subseteq
    {\Pi}_{\lambda''}\).
\end{definition}
\begin{definition}[Whittle index] \label{def:index}
	The Whittle index of state~$x$ of an indexable RB is the smallest value of $\lambda$ for which $x$ is part of the passive set ${\Pi}_{\lambda}$, i.e.,
    \(
	w(x) = \inf \left\{ \lambda \in \mathbb{R} : x \in {\Pi}_{\lambda} \right\}.
  \)
\end{definition}
Alternatively, the Whittle index $w(x)$ is a value of the penalty~$\lambda$ for which the optimal policy is indifferent between taking active and passive action when the RB is in state~$x$. 

\subsection{Whittle Index Heuristic}

A restless multi-armed bandit problem is said to be indexable if all RBs are
indexable. For indexable problems, the Whittle index heuristic is as follows:
\emph{Compute the Whittle indices of all arms offline. Then, at each time,
  obtain the Whittle indices of the current state of all arms and play
arms with the $m$ largest Whittle indices}. 

As mentioned earlier, Whittle index policy is a popular approach for restless
bandits because: (i)~its complexity is linear in the number of alternatives
and (ii)~it often performs close to optimal in
practice~\cite{ansell2003whittle,glazebrook2006some,glazebrook2002index}.
However, there are only a few general conditions to check indexability for
general models.

\subsection{Alternative characterizations of passive set}

We now present alternative characterizations of passive set, which is
important for the sufficient conditions of indexability that we provide later.

\textcolor{black}{Let $\Sigma$ denote the family of all stopping times with respect to the
natural filtration of $\{X_t\}_{t \ge 0}$. For any
state~$x \in \mathcal{X}$, penalty $\lambda \in \mathbb{R}$, and stopping time~$\tau \in \Sigma$, define
\begin{align}
  M(x, \tau) &\coloneqq \EXP\bigl[ \beta^{\tau} | X_0 = x,
    \{ A_t = 0 \}_{t = 0}^{\tau-1} \bigr],
  \notag \\
  L(x, \tau) &\coloneqq \EXP\Big[ \sum_{t = 0}^{\tau-1} \beta^{t}
    c(X_t, 0) + \beta^{\tau} c(X_{\tau}, 1) \Bigm| X_0 = x, 
    \{ A_t = 0 \}_{t = 0}^{\tau-1} \Big],
  \notag \\
  W_\lambda(x) &\coloneqq (1-\beta)\lambda 
  + \beta \sum_{y \in \mathcal{X}} P_{xy}(1) V_\lambda(x).
  \label{eqn:W_def}
\end{align}
Let $h_{\tau, \lambda}$
denote the (history dependent) policy that takes passive action up to time
$\tau-1$, active action at time~$\tau$, and then follows the optimal
policy~$g_\lambda$ (for Problem~\ref{prob:decompose}).} 
We now present different characterizations of the passive set.


\begin{proposition} \label{prop:characterization}
	The following characterizations of the passive set are equivalent.
    \begin{itemize}
      \item $\Pi^{(a)}_\lambda = \{x \in \mathcal{X} : g_\lambda(x) = 0 \}$
      \item $\Pi^{(b)}_\lambda = \{x \in \mathcal{X} : H_\lambda(x, 0) < H_\lambda(x, 1) \}$
      \item $\Pi^{(c)}_\lambda = \{x \in \mathcal{X} : \exists \sigma \in
        \Sigma, \sigma \neq 0, \text{such that } J^{(h_{\sigma, \lambda})}_{\lambda}(x) < J^{(h_0)}_{\lambda}(x) \}$
      \item $\Pi^{(d)}_\lambda = \{x \in \mathcal{X} : \exists \sigma \in
        \Sigma, \sigma \neq 0, \text{such that } (1-\beta) \left( L(x, \sigma) - c(x, 1) \right) < W_\lambda(x) - \EXP [ \beta^{\sigma} W_\lambda(X_\sigma) | X_0 = x ] \}$
    \end{itemize}
\end{proposition}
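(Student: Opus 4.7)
The plan is to establish the chain $\Pi^{(a)}_\lambda = \Pi^{(b)}_\lambda = \Pi^{(c)}_\lambda = \Pi^{(d)}_\lambda$ by proving each identity in turn. The outer two are short---one is a tie-breaking observation, the other a direct computation---while the middle identity requires a Bellman unrolling argument connecting the stopping-time formulation to the optimal policy $g_\lambda$.

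The equality $\Pi^{(a)}_\lambda = \Pi^{(b)}_\lambda$ is immediate from the tie-breaking convention $g_\lambda(x) = 1$ whenever $H_\lambda(x, 0) = H_\lambda(x, 1)$: $g_\lambda(x) = 0$ is then exactly equivalent to $H_\lambda(x, 0) < H_\lambda(x, 1)$. For $\Pi^{(c)}_\lambda = \Pi^{(d)}_\lambda$, I would compute $J^{(h_{\sigma, \lambda})}_\lambda(x)$ in closed form by decomposing along $h_{\sigma, \lambda}$ into the passive stretch over $[0, \sigma - 1]$, the single active step at time $\sigma$, and the optimal continuation from $X_{\sigma + 1}$. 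Substituting $\beta \sum_{y} P_{X_\sigma, y}(1) V_\lambda(y) = W_\lambda(X_\sigma) - (1 - \beta)\lambda$, which follows from \eqref{eqn:W_def}, the $(1-\beta)\lambda\beta^\sigma$ penalty term cancels against the $\lambda$ piece inside $W_\lambda$, yielding $J^{(h_{\sigma, \lambda})}_\lambda(x) = (1-\beta) L(x, \sigma) + \EXP[\beta^\sigma W_\lambda(X_\sigma) \mid X_0 = x]$. Specialising to $\sigma = 0$ gives $J^{(h_0)}_\lambda(x) = (1 - \beta) c(x, 1) + W_\lambda(x)$ (using $L(x, 0) = c(x, 1)$ and $M(x, 0) = 1$), and subtracting reproduces exactly the inequality defining $\Pi^{(d)}_\lambda$.

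For $\Pi^{(b)}_\lambda = \Pi^{(c)}_\lambda$ I would use $J^{(h_0)}_\lambda(x) = H_\lambda(x, 1)$. The reverse inclusion is easy: if $x \notin \Pi^{(b)}_\lambda$, then $V_\lambda(x) = H_\lambda(x, 1) = J^{(h_0)}_\lambda(x)$, and since $J^{(h_{\sigma, \lambda})}_\lambda(x) \geq V_\lambda(x)$ for every stopping time by optimality of $V_\lambda$, no $\sigma$ can yield a strict improvement. For the forward inclusion, if $x \in \Pi^{(b)}_\lambda$, then $V_\lambda(x) = H_\lambda(x, 0) < H_\lambda(x, 1)$; letting $\sigma^\star = \inf\{ t \geq 0 : g_\lambda(X_t) = 1 \}$ under the policy that plays passive at time $0$ and then follows $g_\lambda$, a standard unrolling of \eqref{eqn:bllmn_vf} shows that on $\{ \sigma^\star < \infty \}$ the policies $h_{\sigma^\star, \lambda}$ and $g_\lambda$ coincide, so $V_\lambda(x) = J^{(h_{\sigma^\star, \lambda})}_\lambda(x) < J^{(h_0)}_\lambda(x)$, and $\sigma^\star \geq 1$ because $g_\lambda(x) = 0$.

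The main technical obstacle is the edge case $\PR(\sigma^\star = \infty \mid X_0 = x) > 0$, which can arise when staying passive forever is optimal along entire sample paths (for instance when $\lambda$ is large or passive costs vanish), since then $\sigma^\star$ may fail to be a finite stopping time. I would resolve this by truncating to $\sigma_T = \sigma^\star \wedge T$: assuming $c$ is bounded, dominated convergence gives $J^{(h_{\sigma_T, \lambda})}_\lambda(x) \to V_\lambda(x) < J^{(h_0)}_\lambda(x)$ as $T \to \infty$, so for all sufficiently large $T$ the strict inequality persists, producing a finite stopping time $\sigma_T \neq 0$ witnessing $x \in \Pi^{(c)}_\lambda$ and completing the chain of equivalences.
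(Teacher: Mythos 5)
Your proposal is correct and follows essentially the same route as the paper: the same tie-breaking argument for $\Pi^{(a)}_\lambda = \Pi^{(b)}_\lambda$, the same identity $J^{(h_0)}_\lambda(x) = H_\lambda(x,1) = (1-\beta)c(x,1) + W_\lambda(x)$ together with the hitting time of $\mathcal{X}\setminus\Pi_\lambda$ for $\Pi^{(b)}_\lambda = \Pi^{(c)}_\lambda$, and the same algebraic rearrangement of $J^{(h_{\sigma,\lambda})}_\lambda(x) = (1-\beta)L(x,\sigma) + \EXP[\beta^\sigma W_\lambda(X_\sigma)\mid X_0 = x]$ for $\Pi^{(c)}_\lambda = \Pi^{(d)}_\lambda$. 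Your additional truncation argument for the case $\PR(\sigma^\star = \infty \mid X_0 = x) > 0$ is a reasonable extra precaution that the paper does not spell out (it implicitly allows infinite-valued stopping times, for which $\beta^{\sigma} = 0$), but it does not change the substance of the proof.
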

See Appendix~\ref{app:characterization} for proof.


\section{Sufficient Conditions for Indexability} \label{sec:indexability}
In this section, we identify sufficient conditions for a RB to be indexable.

\subsection{Preliminary results}

Consider a RB~$(\mathcal{X}, \{0, 1\}, \{P(a)\}_{a \in \{0, 1\}}, c, x_0)$. For
any Markov policy $g \colon \mathcal{X} \to \{0,1\}$ and $\lambda \in \mathbb{R}$, we
can write
\begin{equation} \label{eq:J-split}
  J^{(g)}_\lambda(x) = D^{(g)}(x) + \lambda N^{(g)}(x),
\end{equation}
where
\begin{align*}
  D^{(g)}(x) &\coloneqq (1-\beta) \EXP \bigg[ \sum_{t = 0}^{\infty} \beta^t
  c(X_t, g(X_t)) \bigg| X_0 = x \bigg]
  \\
  \shortintertext{and}
  N^{(g)}(x) &\coloneqq (1-\beta) \EXP \bigg[ \sum_{t = 0}^{\infty} \beta^t
  g(X_t) \bigg| X_0 = x \bigg] \label{eqn:N_def}
\end{align*}
are the expected discounted total cost and the expected number of activations
under policy~$g$ starting at initial state~$x$. 
$D^{(g)}(\cdot)$ and $N^{(g)}(\cdot)$ can be computed using policy
evaluation formulas. In particular, define $P^{(g)} \colon \mathcal{X} \times
\mathcal{X} \to \mathbb{R}$ and $c^{(g)} \colon \mathcal{X} \to \mathbb{R}$ as
follows: $P^{(g)}_{xy} = P_{xy}(g(x)) \text{  and  } c^{(g)}_\lambda(x) =
c_\lambda(x, g(x)) = c^{(g)}(x, g(x)) + \lambda g(x)$ for any $x \in
\mathcal{X}$. We also view $g$ as an element in $\{0,1\}^{|\mathcal{X}|}$. 
Then, using the policy evaluation formula for infinite horizon
MDPs~\cite{puterman2014markov}, we obtain
\begin{equation} \label{eqn:DN_LinComp}
  D^{(g)}(x) = (1-\beta)\bigl[(I - \beta P^{(g)})^{-1} c^{(g)}\bigr](x)
  \text{ and }
  N^{(g)}(x) =(1-\beta) \bigl[(I - \beta P^{(g)})^{-1} g\bigr](x).
\end{equation}

\begin{figure}[!t] 
	\centering
	\begin{tikzpicture}[x=1.75cm, y=1.5cm, font=\sffamily\small,>={Latex[width=1.5mm, length=1.5mm]}] 
	\draw[->] (-0.5,0) -- (2.5,0) node[below] (xaxis) {$\lambda$};
	\draw[->] (0,-0.25) -- (0,2) node[left] (yaxis) {$J^{(\cdot)}_\lambda(x)$};
	\draw[-, line width=0.05mm] (0.43, 0.61) -- (1.5,1.9);
	\draw[-, line width=0.05mm] (-0.43, 0.19) -- (0.43, 0.61);
	\draw[-, line width=0.05mm] (1.25, 1.03) -- (2,1.4);
	\draw[-, line width=0.05mm] (1.25, 1.03) -- (2,1.1);
	\draw[-, line width=0.05mm] (-1.25, 0.77) -- (1.25, 1.03);
	\draw[-, line width=0.4mm] (-0.43,-0.41) -- (0.43, 0.61);
	\draw[-, line width=0.4mm] (0.43, 0.6) -- (1.25, 1.03);
	\draw[-, line width=0.4mm] (1.25, 1.03) -- (2,1.1);
	\draw[dashed] (0.43, 0.61) -- (0.43, 0);
	\draw[dashed] (1.25, 1.03) -- (1.25, 0);
	\draw[fill=white] (0.43,0) circle [radius = 0.02cm] node[below] {$\lambda^{12}_{c}$};
	\draw[fill=white] (1.25,0) circle [radius = 0.02cm] node[below] {$\lambda^{23}_{c}$};
	\draw[fill] (0,0.1) circle [radius = 0.02cm] node[left] {$D^{(h_2)}(x)$};
	\draw[fill] (0,0.4) circle [radius = 0.02cm] node[left] {};
	\draw[fill] (0,0.43) circle [radius = 0.00001mm] node[left] {$D^{(h_2)}(x)$};
	\draw[fill] (0,0.9) circle [radius = 0.02cm] node[left] {};
	\draw[fill] (0,0.99) circle [radius = 0.00001mm] node[left] {$D^{(h_3)}(x)$};
	\draw (1.5,1.9) circle [radius = 0.00001mm] node[right] {$J^{(h_1)}_\lambda(x)$};
	\draw (2,1.4) circle [radius = 0.00001mm] node[right] {$J^{(h_2)}_\lambda(x)$};
	\draw (2,1.1) circle [radius = 0.00001mm] node[right] {$J^{(h_3)}_\lambda(x)$};
	\end{tikzpicture}
	\caption{\textcolor{black}{An illustration of the plot of $J^{(\cdot)}_\lambda(x)$ versus $\lambda$ for $g \in {\cal G} := \{h_1, h_2, h_3\}$. Let $\lambda^{ij}_c$ denote the $\lambda$-value of the intersection of $J^{(h_i)}_\lambda(x)$ and $J^{(h_j)}_\lambda(x)$. Note that in this plot, for all $\lambda \in ( -\infty, \lambda^{12}_c ]$ the policy $h_1$ is optimal; for all $\lambda \in [ \lambda^{12}_c, \lambda^{23}_c ]$ the policy $h_2$ is optimal; and for all $\lambda \in [ \lambda^{23}_c, \infty)$ the policy $h_3$ is optimal. The lower concave envelope of $J^{(h_i)}_\lambda(x)$ (shown as a thick line) is the value function $V_\lambda(x)$, which is piecewise linear, concave, increasing and continuous.}}
	\label{fig:DN}
\end{figure}
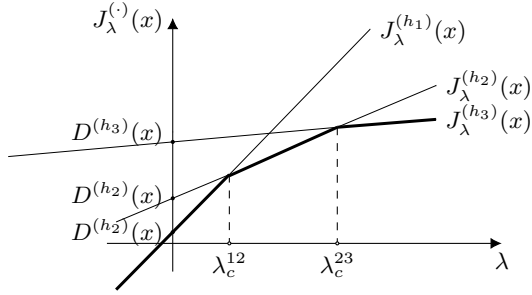

\textcolor{black}{
We now provide a geometric interpretation of the value function $V_\lambda(x)$ as a function of~$\lambda$. For any $g \in {\cal G}$, the plot of $J^{(g)}_\lambda(x) = D^{(g)}(x) + \lambda N^{(g)}(x)$ as a function of $\lambda$ is a straight line with $y$-intercept $D^{(g)}(x)$ and slope $N^{(g)}(x)$. By definition, 
\( V_\lambda(x) = \inf_{g \in {\cal G}} J^{(g)}_\lambda(x). \)
Thus, $V_\lambda(x)$ is the lower concave envelope of the family of straight lines $\{J^{(g)}_\lambda(x)\}_{g \in {\cal G}}$. See Fig.~\ref{fig:DN} for an illustration. Thus, we have the following:
\begin{lemma}\label{lem:V-cts}
  For any $x \in \mathcal{X}$, $V_\lambda(x)$ is continuous, increasing, \textcolor{black}{piece-wise linear} and concave in $\lambda$. Furthermore, when ${\cal X}$ is finite, $V_\lambda(x)$ is piecewise linear. 
\end{lemma}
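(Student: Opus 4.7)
The plan is to exploit the representation $V_\lambda(x) = \inf_{g \in {\cal G}} J^{(g)}_\lambda(x)$ where ${\cal G}$ is the family of deterministic stationary Markov policies $g \colon {\cal X} \to \{0,1\}$. By standard MDP theory~\cite{puterman2014markov}, this infimum equals the value function defined via the Bellman equation~\eqref{eqn:bllmn_vf}. Combined with the affine decomposition~\eqref{eq:J-split}, this writes $V_\lambda(x)$ as a pointwise infimum of affine functions of $\lambda$, and all four claimed properties follow from this structure.

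I would proceed in four short steps. First, I would invoke~\eqref{eq:J-split} and note that for each fixed $g \in {\cal G}$ the map $\lambda \mapsto J^{(g)}_\lambda(x) = D^{(g)}(x) + \lambda N^{(g)}(x)$ is an affine (hence concave) function of $\lambda$, with slope $N^{(g)}(x)$ and intercept $D^{(g)}(x)$. Second, since an infimum of a family of concave functions is concave, $V_\lambda(x)$ is concave in $\lambda$. Third, monotonicity follows by observing that $N^{(g)}(x) \geq 0$ because $g(X_t) \in \{0,1\}$ and $\beta \in (0,1)$; thus each $J^{(g)}_\lambda(x)$ is non-decreasing in $\lambda$, and so is their infimum. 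Fourth, continuity on all of $\mathbb{R}$ follows from the fact that a finite-valued concave function on an open interval of $\mathbb{R}$ is continuous there; finiteness of $V_\lambda(x)$ is immediate since the per-step costs $c_\lambda(\cdot,\cdot)$ are bounded on each compact $\lambda$-interval (using that $c$ is bounded on $\mathcal{X} \times \{0,1\}$ when $\mathcal{X}$ is finite, or imposing the usual boundedness needed for the discounted MDP to be well posed when $\mathcal{X}$ is countable).

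For the furthermore clause, when ${\cal X}$ is finite the set ${\cal G}$ of deterministic stationary Markov policies has cardinality $2^{|{\cal X}|}$, so $V_\lambda(x)$ is a pointwise minimum of finitely many affine functions of $\lambda$, which is by definition piecewise linear (and concave). Geometrically, this is exactly the lower concave envelope depicted in Fig.~\ref{fig:DN}: it is the upper boundary (from below) of the intersection of the half-planes $\{(\lambda,y) : y \leq J^{(g)}_\lambda(x)\}$, $g \in {\cal G}$.

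The one place that requires a little care, and which I view as the main (minor) obstacle, is justifying that we may restrict the infimum in $V_\lambda(x) = \inf_{g} J^{(g)}_\lambda(x)$ to the deterministic stationary Markov class ${\cal G}$ rather than all history-dependent randomized policies. This is a classical fact for infinite-horizon discounted MDPs under mild boundedness of costs~\cite{puterman2014markov}; once this reduction is in hand, the representation as an infimum of affine functions, and hence all four conclusions of the lemma, follow immediately from the arguments above.
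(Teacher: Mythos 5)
Your proposal is correct and follows essentially the same route as the paper: represent $V_\lambda(x)$ as the pointwise infimum over stationary deterministic policies of the affine maps $\lambda \mapsto D^{(g)}(x) + \lambda N^{(g)}(x)$ with nonnegative slopes, and read off concavity, monotonicity, continuity, and (for finite $\mathcal{X}$) piecewise linearity. If anything, your continuity argument (a finite-valued concave function on $\mathbb{R}$ is continuous) is more careful than the paper's appeal to continuity being preserved under infimization, which is not literally true for infinite families; alternatively, one can note that the slopes satisfy $N^{(g)}(x) \in [0,1]$ uniformly in $g$, so the infimum is Lipschitz.
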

\begin{proof}
  For any Markov policy $g$, $N^{(g)}(x)$ is non-negative. Therefore, $J^{(g)}_\lambda(x) = D^{(g)}(x) + \lambda N^{(g)}(x)$ is increasing and continuous in $\lambda$. Since $V_\lambda(x)$ is an infimization of a family of linear functions, it is concave (see Fig.~\ref{fig:DN}). In addition, as monotonicity and continuity are preserved under infimization, the value function is also increasing and continuous in $\lambda$.
  Finally, when ${\cal X}$ is finite, there are only finite number of pieces. Thus, $V_\lambda(x)$ is the minimum of a finitely many linear functions and hence, piece-wise linear.
\end{proof}
}
\begin{lemma}\label{lem:V-diff}
  For any $\lambda', \lambda'' \in \mathbb{R}$, 
  \[
    (\lambda'' - \lambda') N^{(g_{\lambda''})}(x) 
    \le 
    V_{\lambda''}(x) - V_{\lambda'}(x) 
    \le
    (\lambda'' - \lambda') N^{(g_{\lambda'})}(x),
    \quad
    \forall x \in \mathcal{X}.
  \]
  Consequently, $N^{(g_\lambda)}(x)$ is non-increasing in~$\lambda$. 
\end{lemma}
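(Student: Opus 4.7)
The plan is to exploit the affine decomposition $J^{(g)}_\lambda(x) = D^{(g)}(x) + \lambda N^{(g)}(x)$ from (\ref{eq:J-split}) together with the optimality of $g_\lambda$ at each parameter value. This is essentially the standard argument showing that the slopes $N^{(g_\lambda)}(x)$ are supergradients of the concave function $\lambda \mapsto V_\lambda(x)$, as depicted in Fig.~\ref{fig:DN}.

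For the upper bound, I would start from the identity
\[
V_{\lambda'}(x) = J^{(g_{\lambda'})}_{\lambda'}(x) = D^{(g_{\lambda'})}(x) + \lambda' N^{(g_{\lambda'})}(x),
\]
which holds because $g_{\lambda'}$ is optimal for Problem~\ref{prob:decompose} at $\lambda = \lambda'$. Evaluating the same policy at $\lambda''$ and invoking suboptimality yields
\[
V_{\lambda''}(x) \le J^{(g_{\lambda'})}_{\lambda''}(x) = D^{(g_{\lambda'})}(x) + \lambda'' N^{(g_{\lambda'})}(x).
\]
Subtracting the first equation from this inequality gives $V_{\lambda''}(x) - V_{\lambda'}(x) \le (\lambda'' - \lambda') N^{(g_{\lambda'})}(x)$.

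The lower bound follows from the symmetric argument with the roles of $\lambda'$ and $\lambda''$ swapped: use $V_{\lambda''}(x) = D^{(g_{\lambda''})}(x) + \lambda'' N^{(g_{\lambda''})}(x)$ together with the inequality $V_{\lambda'}(x) \le D^{(g_{\lambda''})}(x) + \lambda' N^{(g_{\lambda''})}(x)$, and subtract.

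For the consequence, I would take $\lambda' < \lambda''$, so that $\lambda'' - \lambda' > 0$, and chain the two sides of the just-proved inequality to get $(\lambda'' - \lambda') N^{(g_{\lambda''})}(x) \le (\lambda'' - \lambda') N^{(g_{\lambda'})}(x)$, hence $N^{(g_{\lambda''})}(x) \le N^{(g_{\lambda'})}(x)$. There is no real obstacle: the result is a one-line consequence of the affine-in-$\lambda$ structure of $J^{(g)}_\lambda(x)$ and the definition of $V_\lambda(x)$ as the pointwise infimum over $g$. The only mild care is that $g_\lambda$ must be a genuine minimizer (as guaranteed by the tie-breaking convention fixed just before (\ref{eqn:pass_set})), so that equality holds in $V_\lambda(x) = J^{(g_\lambda)}_\lambda(x)$.
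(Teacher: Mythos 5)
Your proposal is correct and follows essentially the same argument as the paper: both use the affine decomposition $J^{(g)}_\lambda(x) = D^{(g)}(x) + \lambda N^{(g)}(x)$ together with the optimality of $g_{\lambda'}$ at $\lambda'$ (for the upper bound) and of $g_{\lambda''}$ at $\lambda''$ (for the lower bound), then chain the two inequalities to obtain monotonicity of $N^{(g_\lambda)}(x)$. No gaps.
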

\begin{proof}
  Recall that $V_\lambda(x) = J^{(g_\lambda)}_\lambda(x) \le
  J^{(g_{\lambda'})}_\lambda(x)$ for any $\lambda' \neq \lambda$. 
  Thus,
  \begin{align}
    V_{\lambda''}(x) - V_{\lambda'}(x)
    &=   J^{(g_{\lambda''})}_{\lambda''}(x) - J^{(g_{\lambda'})}_{\lambda'}(x) 
    \le  J^{(g_{\lambda'})}_{\lambda''}(x) - J^{(g_{\lambda'})}_{\lambda'}(x) 
    \notag \\
    &\stackrel{(a)}=
    (\lambda'' - \lambda')N^{(g_{\lambda'})}(x),
  \end{align}
  where $(a)$ follows from~\eqref{eq:J-split}. Similarly, we have 
  \begin{align}
    V_{\lambda''}(x) - V_{\lambda'}(x)
    &=   J^{(g_{\lambda''})}_{\lambda''}(x) - J^{(g_{\lambda'})}_{\lambda'}(x) 
    \ge  J^{(g_{\lambda''})}_{\lambda''}(x) - J^{(g_{\lambda''})}_{\lambda'}(x) 
    \notag \\
    &\stackrel{(a)}=
    (\lambda'' - \lambda')N^{(g_{\lambda''})}(x),
  \end{align}
  where $(a)$ follows from~\eqref{eq:J-split}. The result follows from
  combining the above inequalities.
\end{proof}

\subsection{Sufficient conditions for indexability}

\begin{theorem} \label{Thm:suf_1}
  Define $\mathcal H = \{ (g,h) : g, h \colon \mathcal X \to \{0, 1\} 
  \text{ such that for all } x \in \mathcal X, N^{(g)}(x) \ge N^{(h)}(x) \}$. 
	Each of the following is a sufficient condition for Whittle indexability:
	\begin{itemize}
      \item[a.] For any $g,h \in \mathcal H$, we have that for every $x, z
        \in \mathcal{X}$,
        \begin{equation}
          \sum_{y \in \mathcal{X}}
          \Bigl\{
            \bigl[ \beta P_{zy}(1) - P_{xy}(1) \bigr]^+ N^{(g)}(y) 
            -
            \bigl[ P_{xy}(1) - \beta P_{zy}(1) \bigr]^+ N^{(h)}(y) 
          \Bigr\}
          \le \frac{(1-\beta)^2}{\beta}.
          \label{eqn:P1_cond}
        \end{equation}
      \item[b.] For any $g, h \in \mathcal H$, we have that for every $x \in
        \mathcal X$, 
        \begin{equation}
          \sum_{y \in \mathcal{X}} 
          \Bigl\{
            \bigl[ P_{xy}(0) - P_{xy}(1) \bigr]^+ N^{(g)}(y) 
            -
            \bigl[ P_{xy}(1) - P_{xy}(0) \bigr]^+ N^{(h)}(y) 
          \Bigr\}
          \le \frac{1-\beta}{\beta}.
        \label{eqn:P2_cond}
        \end{equation}
	\end{itemize}
\end{theorem}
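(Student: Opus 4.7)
The plan is to prove each sufficient condition separately by reducing indexability to the monotonicity in $\lambda$ of an appropriate scalar function, and then controlling the resulting differences of value functions via Lemma~\ref{lem:V-diff}.

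For part~(b), I would use characterization $\Pi^{(b)}_\lambda$ from Proposition~\ref{prop:characterization}, so indexability reduces to showing that $\Delta_\lambda(x) := H_\lambda(x,1) - H_\lambda(x,0)$ is non-decreasing in $\lambda$ for every $x \in \mathcal{X}$. From~\eqref{eqn:def_H}, for any $\lambda' \le \lambda''$,
\[
\Delta_{\lambda''}(x) - \Delta_{\lambda'}(x) = (1-\beta)(\lambda''-\lambda') + \beta \sum_{y \in \mathcal{X}} \bigl[P_{xy}(1) - P_{xy}(0)\bigr]\bigl[V_{\lambda''}(y) - V_{\lambda'}(y)\bigr].
\]
I would split the sum according to the sign of $P_{xy}(1) - P_{xy}(0)$ and apply Lemma~\ref{lem:V-diff}: where the coefficient is non-negative, lower-bound $V_{\lambda''}(y) - V_{\lambda'}(y)$ by $(\lambda''-\lambda')N^{(g_{\lambda''})}(y)$; where it is negative, upper-bound it by $(\lambda''-\lambda')N^{(g_{\lambda'})}(y)$. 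The resulting sufficient condition for non-negativity is exactly~\eqref{eqn:P2_cond} taken with $g = g_{\lambda'}$ and $h = g_{\lambda''}$, and the pair $(g_{\lambda'}, g_{\lambda''})$ lies in $\mathcal{H}$ by the monotonicity-in-$\lambda$ of $N^{(g_\lambda)}$ stated at the end of Lemma~\ref{lem:V-diff}.

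For part~(a), I would use characterization $\Pi^{(d)}_\lambda$, so indexability follows if, for every $x \in \mathcal{X}$ and every $\sigma \in \Sigma$ with $\sigma \ge 1$, the function $\Psi_\sigma(\lambda) := W_\lambda(x) - \EXP\bigl[\beta^{\sigma} W_\lambda(X_\sigma) \bigm| X_0 = x, \{A_t = 0\}_{t<\sigma}\bigr]$ is non-decreasing in~$\lambda$. Expanding $W_\lambda$ via~\eqref{eqn:W_def} and collecting terms,
\[
\Psi_\sigma(\lambda'') - \Psi_\sigma(\lambda') = (1-\beta)(\lambda''-\lambda')(1 - M(x,\sigma)) + \beta \EXP\Bigl[ \sum_{y \in \mathcal{X}} \bigl(P_{xy}(1) - \beta^{\sigma} P_{X_\sigma y}(1)\bigr)\bigl(V_{\lambda''}(y) - V_{\lambda'}(y)\bigr) \Bigr].
\]
I would argue sample-path by sample-path. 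On the event $\{\sigma = k,\ X_\sigma = z\}$, applying Lemma~\ref{lem:V-diff} exactly as in part~(b) (splitting according to the sign of $P_{xy}(1) - \beta^k P_{zy}(1)$) reduces path-wise non-negativity of the integrand to the $(k,z)$-parametrized family
\[
\sum_{y \in \mathcal{X}} \Bigl\{ \bigl[\beta^k P_{zy}(1) - P_{xy}(1)\bigr]^+ N^{(g)}(y) - \bigl[P_{xy}(1) - \beta^k P_{zy}(1)\bigr]^+ N^{(h)}(y) \Bigr\} \le \frac{(1-\beta)(1-\beta^k)}{\beta}.
\]
Condition~\eqref{eqn:P1_cond} is the $k=1$ instance (again with $g = g_{\lambda'}$ and $h = g_{\lambda''}$). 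Because $N^{(g)} \ge N^{(h)} \ge 0$ for $(g,h) \in \mathcal{H}$ and $\beta^k$ is non-increasing in~$k \ge 1$, the left-hand side above is non-increasing in~$k$ while the right-hand side is non-decreasing, so~\eqref{eqn:P1_cond} implies the inequality for every $k \ge 1$. Taking expectations over the realizations of $(\sigma, X_\sigma)$ then yields $\Psi_\sigma(\lambda'') \ge \Psi_\sigma(\lambda')$.

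The most delicate point will be the pathwise reduction in part~(a): we have to verify that the $(k,z)$-parametrized family is indeed what path-wise non-negativity requires, that $k=1$ is its worst case, and that the per-path inequalities recombine correctly when taking expectations~-- this last step works because the leading term $(1-\beta)(\lambda''-\lambda')(1-M(x,\sigma))$ equals $\EXP[(1-\beta)(\lambda''-\lambda')(1-\beta^\sigma)]$, exactly matching the path-wise constant produced by the argument.
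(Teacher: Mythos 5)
Your proposal is correct and follows essentially the same route as the paper: part~(b) via the characterization $\Pi^{(b)}_\lambda$ and monotonicity of $H_\lambda(x,1)-H_\lambda(x,0)$, and part~(a) via $\Pi^{(d)}_\lambda$ and monotonicity of $W_\lambda(x) - \EXP[\beta^\sigma W_\lambda(X_\sigma)\mid X_0=x]$, in both cases splitting by sign, invoking Lemma~\ref{lem:V-diff} with the pair $(g_{\lambda'},g_{\lambda''})\in\mathcal H$. The only (harmless) variation is in part~(a): the paper simply replaces $\beta^\sigma$ by $\beta$ using $\sigma\ge 1$ and the non-negativity of $V_{\lambda''}-V_{\lambda'}$, whereas you keep the pathwise $\beta^k$ and argue that the $k=1$ instance of the resulting family of inequalities is the worst case; both reduce to~\eqref{eqn:P1_cond}.
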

See Appendix~\ref{prf:Thm:suf_1} for the proof. 
The sufficient conditions of Theorem~\ref{Thm:suf_1} can be difficult to verify.
Simpler sufficient conditions are stated below.

\begin{proposition} \label{prop:1}
	Each of the following is a sufficient condition for~\eqref{eqn:P1_cond}.
	\begin{enumerate}
		\item[a.] $\max_{x, z \in \mathcal{X}} \sum_{y \in \mathcal{X}} \bigl[
          \beta P_{z y}(1) - P_{x y}(1) \bigr]^{+} \leq (1 - \beta)^2/\beta$.
		\item[b.] $P_{xy}(1) = P_{zy}(1)$, for any $x, y, z \in \mathcal{X}$.
	\end{enumerate}
	In addition, each of the following is a sufficient condition for~\eqref{eqn:P2_cond}.
	\begin{enumerate}
		\item[c.] $\max_{x \in \mathcal{X}} \sum_{y \in \mathcal{X}} \left[ P_{xy}(0) - P_{xy}(1) \right]^{+} \leq (1 - \beta)/\beta$.
        \item[d.] $\beta \le 0.5$.
	\end{enumerate}
\end{proposition}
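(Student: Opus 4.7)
The plan is to reduce all four parts to a single observation: for every Markov policy $g$ and every state $y$, $N^{(g)}(y) \in [0,1]$. Indeed, $N^{(g)}(y) = (1-\beta)\EXP[\sum_{t=0}^{\infty} \beta^t g(X_t) \mid X_0 = y]$ with $g(X_t) \in \{0,1\}$ and $(1-\beta)\sum_{t=0}^{\infty}\beta^t = 1$, so $0 \le N^{(g)}(y) \le 1$. This bound lets me control the positive term in~\eqref{eqn:P1_cond} and~\eqref{eqn:P2_cond}, while the negative term (the one with $N^{(h)}$) can simply be dropped since $N^{(h)}(y) \ge 0$ makes it non-positive.

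For part (a), I drop the second summand in~\eqref{eqn:P1_cond} and bound $N^{(g)}(y) \le 1$ in the first, obtaining
\[
\sum_{y} \bigl[\beta P_{zy}(1) - P_{xy}(1)\bigr]^{+} N^{(g)}(y) - \sum_{y} \bigl[P_{xy}(1) - \beta P_{zy}(1)\bigr]^{+} N^{(h)}(y) \le \sum_{y} \bigl[\beta P_{zy}(1) - P_{xy}(1)\bigr]^{+},
\]
which by hypothesis is at most $(1-\beta)^2/\beta$ uniformly in $x,z$. For part (b), when $P_{xy}(1) = P_{zy}(1)$ for all $x,z$, we have $\beta P_{zy}(1) - P_{xy}(1) = -(1-\beta)P_{xy}(1) \le 0$, so the positive-part term vanishes; the remaining $-\sum_{y}(1-\beta)P_{xy}(1) N^{(h)}(y) \le 0 \le (1-\beta)^2/\beta$, so~\eqref{eqn:P1_cond} holds trivially.

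Part (c) is handled exactly as part (a): after dropping the non-positive second sum in~\eqref{eqn:P2_cond} and using $N^{(g)}(y) \le 1$, the LHS is bounded by $\sum_{y}[P_{xy}(0) - P_{xy}(1)]^{+}$, which by assumption does not exceed $(1-\beta)/\beta$. For part (d), I show that the bound in (c) holds automatically when $\beta \le 1/2$: since $P(0)$ is a probability distribution, $\sum_{y}[P_{xy}(0) - P_{xy}(1)]^{+} \le \sum_{y} P_{xy}(0) = 1$, and $\beta \le 1/2$ gives $(1-\beta)/\beta \ge 1$, so (c) is satisfied and thus (d) implies~\eqref{eqn:P2_cond}.

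Since every step is a direct majorization using $0 \le N^{(g)}, N^{(h)} \le 1$ together with elementary manipulations of the transition kernels, there is no real obstacle; the only thing to keep track of is the sign convention so that the term I drop is indeed non-positive and the term I bound is indeed non-negative.
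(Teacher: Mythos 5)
Your proof is correct and follows essentially the same route as the paper: drop the non-positive $N^{(h)}$ term, bound $N^{(g)}(y)\le 1$, and for parts (b) and (d) observe respectively that the positive part vanishes and that condition (c) holds automatically since $\sum_{y}[P_{xy}(0)-P_{xy}(1)]^{+}\le 1\le(1-\beta)/\beta$. No differences worth noting.
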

See Appendix~\ref{prf:prop:1} for proof.

\paragraph*{Some remarks}

\begin{enumerate}
 	\item The sufficient conditions of Theorem~\ref{Thm:suf_1} and
    Proposition~\ref{prop:1} a, c, d  may be viewed as bounds on the discount
    factor $\beta$ for which a RB is indexable. Numerical experiments to explore such a property are presented in \cite{nino2007dynamic}. \textcolor{black}{A qualitatively similar result was established in~\cite[Corollary 5]{nino2001restless} which showed that a restless bandit process is GCL (Generalized Conservation Laws) indexable for sufficiently small discount factors.} GCL indexability is a sub-class of PCL indexability, which is a sub-class of Whittle indexability. Thus, Proposition \ref{prop:1} provides a quantitative characterization of the qualitative observation made in \cite{bertsimas1996conservation} and generalizes it to a broader class of models.

 	\item We refer to models that satisfy the sufficient condition of
    Proposition~\ref{prop:1}.b as \textit{restless bandits with controlled
    restarts}. Such models arise in various scheduling problems (e.g., machine
    maintenance, surveillance, etc.) where taking the active action resets the
    state according to known probability distribution. Specific instances of
    such models are considered in~\cite{akbarzadeh2019restless,
    wang2019whittle}. The special case when the active action resets to a
    specific (pristine) state are considered in~\cite{glazebrook2005index,
    glazebrook2006some}.

	\item \textcolor{black}{A different class of restart models have been considered in \cite{Jacko2012,ayesta2010modeling,avrachenkov2013congestion} where the passive action resets the state of the arm. Note that such models do not satisfy Proposition~\ref{prop:1}b and additional modeling assumptions are required to establish indexability. See \cite{Jacko2012,ayesta2010modeling,avrachenkov2013congestion} for details.}
\end{enumerate}

\section{An algorithm to compute Whittle index} \label{sec:computation}

Given an indexable RB, a naive method to compute Whittle index at state~$x$
is to do a binary search over the penalty~$\lambda$ and find the critical
penalty $w(x)$ such that for $\lambda \in (-\infty, w(x))$, $g_\lambda(x) = 0$
and for $\lambda \in [w(x), \infty)$, $g_\lambda(x) = 1$. Although such an
approach has been used in the
literature~\cite{qian2016restless,akbarzadeh2019dynamic}, it is not efficient
as it requires a separate binary search for each state. 
\textcolor{black}{For a sub-class of restless bandits which satisfy an additional technical condition called partial conservation law (PCL), Nino-Mora~\cite{nino2002dynamic,nino2007dynamic} presented an algorithm called adaptive greedy algorithm to compute the Whittle index. 
In this section, we present an algorithm that may be viewed as a refinement of the adaptive greedy algorithm and show that it computes the Whittle index for all indexable RBs. The result of this section are restricted to the case of finite ${\cal X}$.}

Let $K$ denote the number of states (i.e., $K = |{\cal X}|$) and $K_D (\leq K)$ denote the number of distinct Whittle indices. Let $\Lambda^* = \{\lambda_1, \dots, \lambda_{K_D}\}$ where $\lambda_1 < \lambda_2 < \dots < \lambda_{K_D}$ denote the sorted list of distinct Whittle indices. Also, let $\lambda_0 = -\infty$. For any $d \in \{0, \ldots, K_D\}$, let ${\cal P}_d := \{x \in {\cal X}: w(x) \leq \lambda_d\}$ denote the set of states with Whittle index less than or equal to $\lambda_d$. Note that ${\cal P}_0 = \emptyset$ and ${\cal P}_{K_D} = {\cal X}$. \textcolor{black}{Let $\Gamma_{d+1} = {\cal P}_{d+1}\backslash{\cal P}_d$ denote the states with Whittle index $\lambda_{d+1}$.}

\textcolor{black}{For any subset ${\cal S} \subseteq {\cal X}$, define the policy $\bar{g}^{(\cal S)}: {\cal X} \to \{0, 1\}$ as 
\begin{equation} \label{eqn:gbar}
	\bar{g}^{(\cal S)}(x) = 
	\begin{cases}
	0, ~ \text{ if } x \in {\cal S} \\
	1, ~ \text{ if } x \in {\cal X}\backslash{\cal S}.
	\end{cases}
\end{equation}
Thus, the policy $\bar{g}^{(\cal S)}$ takes passive action in set ${\cal S}$ and active action in set ${\cal X}\backslash{\cal S}$.}

Now for any $d \in \{0, \ldots, K_D-1\}$, and all states $y \in {\cal X}\backslash{\cal P}_{d}$, define $h_d = \bar{g}^{({\cal P}_d)}$, $h_{d,y} = \bar{g}^{({\cal P}_d \cup \{y\})}$ and for all $x \in \Lambda_{d, y}$,
\begin{align} \label{eq:mu}
\Lambda_{d, y} = \{x \in {\cal X}: N^{(h_d)}(x) \neq N^{({h}_{d,y})}(x) \}, ~ \mu_{d,y}(x) = \dfrac{D^{({h}_{d, y})}(x) - D^{(h_d)}(x)}{N^{(h_d)}(x) - N^{({h}_{d, y})}(x)}. 
\end{align}

\textcolor{black}{
\begin{lemma} \label{lemma:WJ}
	For \textcolor{black}{an indexable RB with} $d \in \{0, \ldots, K_D-1\}$, we have the following:
	\begin{enumerate}
		\item For all \textcolor{black}{$y \in \Gamma_{d+1}$}, we have $w(y) = \lambda_{d+1}$.
		\item For all $y \in {\cal X}\backslash{\cal P}_d$ and $\lambda \in (\lambda_{d}, \lambda_{d+1}]$, we have $J^{(h_{d,y})}_\lambda(x) \geq J^{(h_d)}_\lambda(x)$ for all $x \in {\cal X}$ with equality if and only if \textcolor{black}{$y \in \Gamma_{d+1}$} and $\lambda = \lambda_{d+1}$.
	\end{enumerate}
\end{lemma}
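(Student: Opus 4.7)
The plan is to use indexability to identify the optimal policy $g_\lambda$ explicitly on each interval $(\lambda_d, \lambda_{d+1}]$ and then compare $h_{d,y}$ against it via a one-step deviation argument at the single state of difference~$y$. Part~(1) is immediate from the definitions: any $y \in {\cal P}_{d+1}\setminus {\cal P}_d$ satisfies $\lambda_d < w(y) \le \lambda_{d+1}$, but $w(y)$ lies in $\Lambda^*$ and $\lambda_d, \lambda_{d+1}$ are consecutive elements of $\Lambda^*$, so $w(y) = \lambda_{d+1}$.

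For Part~(2), the first step is to show that $h_d$ is optimal for the penalized MDP throughout $\lambda \in (\lambda_d, \lambda_{d+1}]$. Using the tie-breaking rule from~\eqref{eqn:bllmn_vf} (which assigns the active action at indifference), together with indexability, I will verify that $\Pi_\lambda = \{x\in{\cal X}: w(x) < \lambda\}$; on $(\lambda_d, \lambda_{d+1}]$ this set equals ${\cal P}_d$. Hence $g_\lambda = h_d$, i.e., $J^{(h_d)}_\lambda = V_\lambda$, and the desired inequality $J^{(h_{d,y})}_\lambda(x) \ge V_\lambda(x) = J^{(h_d)}_\lambda(x)$ follows for every $x\in{\cal X}$.

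The equality characterization then reduces to asking when $h_{d,y}$ is also optimal. Because $h_{d,y}$ agrees with $h_d$ everywhere except at the state $y$, the one-deviation principle says $h_{d,y}$ is optimal iff the passive action is greedy at $y$ with respect to $V_\lambda$, i.e., iff $\lambda \ge w(y)$. I will split into two cases: if $y \in {\cal P}_{d+1}\setminus {\cal P}_d$, Part~(1) gives $w(y) = \lambda_{d+1}$, so within $(\lambda_d, \lambda_{d+1}]$ the condition $\lambda \ge w(y)$ forces $\lambda = \lambda_{d+1}$, at which both policies achieve $V_{\lambda_{d+1}}$ and equality holds for every $x$; if $y \in {\cal X}\setminus {\cal P}_{d+1}$, then $w(y) > \lambda_{d+1} \ge \lambda$, so the passive action is strictly suboptimal at $y$, and a one-step Bellman evaluation yields $J^{(h_{d,y})}_\lambda(y) \ge H_\lambda(y,0) > H_\lambda(y,1) = V_\lambda(y) = J^{(h_d)}_\lambda(y)$, breaking the equality at $x=y$.

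The main obstacle will be the careful handling of the right endpoint $\lambda = \lambda_{d+1}$: I need to confirm that the tie-breaking convention combined with indexability still assigns $g_{\lambda_{d+1}} = h_d$ (rather than $h_{d+1}$) and that, simultaneously, $h_{d,y}$ is optimal for every $y$ with $w(y) = \lambda_{d+1}$, so that the equality in~(2) propagates globally through the Bellman equation instead of holding only on a proper subset of states. Once this endpoint subtlety is settled, the rest of the argument is a direct application of the Bellman equation together with Definition~\ref{def:index}.
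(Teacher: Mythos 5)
Your proof follows essentially the same route as the paper's: part~(1) is read off from the definitions of ${\cal P}_d$, ${\cal P}_{d+1}$ and the fact that $\lambda_d,\lambda_{d+1}$ are consecutive elements of $\Lambda^*$, and part~(2) rests on the observation that $h_d$ coincides with the optimal policy $g_\lambda$ throughout $(\lambda_d,\lambda_{d+1}]$, so that $J^{(h_{d,y})}_\lambda(x) \ge V_\lambda(x) = J^{(h_d)}_\lambda(x)$. Your write-up is in fact more complete than the paper's two-line argument: you justify $\Pi_\lambda = {\cal P}_d$ on the half-open interval (including the tie-breaking subtlety at $\lambda = \lambda_{d+1}$, which needs the continuity of $V_\lambda$ from Lemma~\ref{lem:V-cts}), and you supply the ``only if'' direction of the equality claim via the one-deviation principle, neither of which the paper's proof addresses explicitly.
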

\begin{proof}
	See Appendix~\ref{prf:lemma:WJ}.
\end{proof}
}

\textcolor{black}{
\begin{theorem}\label{Thm:Widx}
	\textcolor{black}{For an indexable RB,} the following properties hold:
	\begin{enumerate}
      \item For any \textcolor{black}{$y \in \Gamma_{d+1}$},
        the set $\Lambda_{d, y}$ is non-empty.
		\item For any $x \in \Lambda_{d, y}$, $\mu_{d,y}(x) \geq \lambda_{d+1}$ with equality if and only if \textcolor{black}{$y \in \Gamma_{d+1}$}.
	\end{enumerate}
\end{theorem}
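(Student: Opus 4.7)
My plan is to analyze the affine function $F_x(\lambda) := J^{(h_{d,y})}_\lambda(x) - J^{(h_d)}_\lambda(x) = [D^{(h_{d,y})}(x) - D^{(h_d)}(x)] + \lambda[N^{(h_{d,y})}(x) - N^{(h_d)}(x)]$, whose unique zero when $x \in \Lambda_{d,y}$ is precisely $\mu_{d,y}(x)$. Lemma~\ref{lemma:WJ}, extended to the closed interval by continuity of $J^{(\cdot)}_\lambda(x)$ in $\lambda$, gives $F_x(\lambda) \geq 0$ on $[\lambda_d, \lambda_{d+1}]$, placing $\mu_{d,y}(x)$ outside the open interval $(\lambda_d, \lambda_{d+1})$. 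To conclude $\mu_{d,y}(x) \geq \lambda_{d+1}$ in part 2, I would show that $F_x$ has non-positive slope on $\Lambda_{d,y}$ --- equivalently, that $N^{(h_{d,y})}(x) \leq N^{(h_d)}(x)$ --- so that the root lies at or above $\lambda_{d+1}$.

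For $y \in {\cal P}_{d+1}\backslash{\cal P}_d$, the slope bound follows from the piecewise-linear concave structure of $V_\lambda$ (Lemma~\ref{lem:V-cts}): since $w(y) = \lambda_{d+1}$, both active and passive at $y$ are optimal at $\lambda_{d+1}$, so $h_{d,y}$ is itself optimal at $\lambda_{d+1}$, and the line $J^{(h_{d,y})}_\lambda(x)$ is a supporting hyperplane of $V_\lambda(x)$ at $\lambda_{d+1}$. Its slope $N^{(h_{d,y})}(x)$ must therefore lie in the subdifferential $[N^{(h_{d+1})}(x), N^{(h_d)}(x)]$. The equality condition of part 2 then follows directly from Lemma~\ref{lemma:WJ}: $F_x(\lambda_{d+1}) = 0$ if and only if $h_{d,y}$ is optimal at $\lambda_{d+1}$, which holds if and only if $y \in {\cal P}_{d+1}\backslash{\cal P}_d$.

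For part 1, I would argue by contradiction. Suppose $\Lambda_{d,y} = \emptyset$ for some $y \in {\cal P}_{d+1}\backslash{\cal P}_d$; then $F_x$ is constant in $\lambda$ for every $x$. Since $y \in {\cal P}_{d+1}\backslash{\cal P}_d$ makes $h_{d,y}$ optimal at $\lambda_{d+1}$, $F_x(\lambda_{d+1}) = 0$, so $F_x \equiv 0$ and $J^{(h_{d,y})}_\lambda \equiv J^{(h_d)}_\lambda$ for all $\lambda$. But for $\lambda \in (\lambda_d, \lambda_{d+1})$, $h_d = g_\lambda$ is the unique optimal policy, so $J^{(h_{d,y})}_\lambda = V_\lambda$ would force $h_{d,y}$ to be optimal there as well, contradicting the strict suboptimality of passive at $y$ when $\lambda < w(y) = \lambda_{d+1}$.

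The principal obstacle will be extending the slope bound $N^{(h_{d,y})}(x) \leq N^{(h_d)}(x)$ to arbitrary $y \in {\cal X}\backslash{\cal P}_d$, where $h_{d,y}$ need not be optimal at any single $\lambda$ and the tangent-line argument fails directly. Handling this case will require a more structural use of indexability, since explicit non-indexable examples (e.g., bandits in which a passive action at one state sends the chain to a ``high-activation'' absorbing region while the active action sends it to a ``low-activation'' one) show that a single deviation from active to passive can actually increase the discounted activation count when indexability is violated; thus indexability is essential and must enter the argument in a non-trivial way.
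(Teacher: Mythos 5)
Your part~1 and the equality case of part~2 are correct and follow essentially the paper's route: the contradiction argument for $\Lambda_{d,y}=\emptyset$ is the paper's argument almost verbatim, and your observation that $F_x(\lambda_{d+1})=0$ precisely when $y\in\mathcal{P}_{d+1}\setminus\mathcal{P}_d$ (via continuity of $V_\lambda$ and Lemma~\ref{lemma:WJ}) is exactly how the paper pins down $\mu_{d,y}(x)=\lambda_{d+1}$. Your supporting-line argument for the sign of $N^{(h_d)}(x)-N^{(h_{d,y})}(x)$ when $h_{d,y}$ is optimal at $\lambda_{d+1}$ is a refinement the paper does not spell out.

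The genuine gap is the one you flag yourself and then leave open: part~2 claims $\mu_{d,y}(x)\geq\lambda_{d+1}$ for \emph{every} $y\in\mathcal{X}\setminus\mathcal{P}_d$, and it is precisely the case $y\notin\mathcal{P}_{d+1}$ that the algorithm needs (otherwise a state with a larger Whittle index could yield a smaller $\mu^*_{d,y}$ and be selected prematurely), so a proof that only covers $y\in\mathcal{P}_{d+1}\setminus\mathcal{P}_d$ does not establish the theorem. For comparison, the paper handles all $y$ uniformly by rewriting $J^{(h_{d,y})}_\lambda(x)\geq J^{(h_d)}_\lambda(x)$ as $D^{(h_{d,y})}(x)-D^{(h_d)}(x)\geq\lambda\bigl(N^{(h_d)}(x)-N^{(h_{d,y})}(x)\bigr)$ for all $\lambda\in(\lambda_d,\lambda_{d+1}]$ and then dividing to get $\mu_{d,y}(x)\geq\lambda$; that division preserves the inequality direction only if $N^{(h_d)}(x)>N^{(h_{d,y})}(x)$, which is exactly the slope condition you could not establish, so the paper asserts rather than derives the step you are stuck on. Two concrete observations would sharpen your write-up: (i)~for $d=0$ the sign is free, since $F_x\geq 0$ on all of $(-\infty,\lambda_1]$ forces an affine $F_x$ to have non-positive slope, hence strictly negative slope on $\Lambda_{0,y}$ and a root $\geq\lambda_1$; (ii)~your worry is not hypothetical even for indexable models --- the paper's own Appendix example has $N^{(\bar g^{(\{1,3\})})}(x)>N^{(\bar g^{(\{3\})})}(x)$ for every $x$, i.e., switching one more state to passive \emph{increases} the activation count componentwise --- so any completion for $d\geq 1$ must genuinely exploit that $h_d$ is the optimal policy on $(\lambda_d,\lambda_{d+1}]$, not just that $h_{d,y}$ differs from $h_d$ in one state.
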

\begin{proof}
	The proof of each part is as follows:
	\begin{enumerate}
      	\item We prove the result by contradiction. Suppose that there exists a \textcolor{black}{$y \in \Gamma_{d+1}$}, such that $\Lambda_{d, y} = \emptyset$ which means $N^{(h_d)}(x) = N^{({h}_{d, y})}(x)$ for all $x \in {\cal X}$. By Lemma~\ref{lemma:WJ}, we have that $J^{(h_d)}_{\lambda_{d+1}}(x) = J^{({h}_{d, y})}_{\lambda_{d+1}}(x)$. Therefore, from~\eqref{eq:J-split} we infer $D^{(h_d)}(x) = D^{({h}_{d, y})}(x)$ for all $x \in {\cal X}$. Since both $D^{(g)}(x)$ and $N^{(g)}(x)$ do not depend on $\lambda$, \eqref{eq:J-split} implies that for any $\lambda$ and $x \in \mathcal{X}$, we have $J^{(h_d)}_{\lambda}(x) = J^{({h}_{d, y})}_{\lambda}(x)$. This implies that the policies $h_d$ and $h_{d,y}$ will be optimal for the same set of~$\lambda$. Now, since policy $h_d$ is optimal for all $\lambda \in (\lambda_d, \lambda_{d+1}]$ (by definition), so is $h_{d,y}$. Hence $y \in \mathcal{P}_d$. But we started by assuming that $y \not\in \mathcal{P}_d$, so we have a contradiction.
       	\item By Lemma~\ref{lemma:WJ}, part~2, for all $y \in {\cal X}\backslash{\cal P}_d$, $\lambda \in (\lambda_{d}, \lambda_{d+1}]$ and for all $x \in \Lambda_{d, y}$, we have $J^{(h_{d,y})}_\lambda(x) \geq J^{(h_d)}_\lambda(x)$. Then, by \eqref{eq:J-split} we infer 
       	\[D^{(h_{d,y})}(x) + \lambda N^{(h_{d,y})}(x) \geq D^{(h_d)}(x) + \lambda N^{(h_d)}(x).\]
       	Finally, we have $\mu_{d,y}(x) \geq \lambda$ and thus, $\mu_{d,y}(x) \geq \lambda_{d+1}$ for all $x \in \Lambda_{d, y}$. 
		This proves the first part of the statement. To prove the second part, note that policy~$h_d$ is an optimal policy for $\lambda \in (\lambda_d, \lambda_{d+1}]$ and for any $y \in \mathcal{P}_{d+1}$, the policy~${h}_{d, y}$ is an optimal policy for $\lambda \in (\lambda_{d+1}, \lambda_{d+2}]$.
		From Lemma~\ref{lem:V-cts}, we know that $V_\lambda(x)$ is continuous in $\lambda$ for all $x \in \mathcal{X}$. Thus, for all $x \in \mathcal{X}$,
		\[
		\lim_{\lambda \uparrow \lambda_{d+1}} J^{(h_d)}_\lambda(x) = \lim_{\lambda \uparrow \lambda_{d+1}} V_\lambda(x) 
		= \lim_{\lambda \downarrow \lambda_{d+1}} V_\lambda(x) = \lim_{\lambda \downarrow \lambda_{d+1}} J^{(h_{d, y})}_\lambda(x).
		\]
 		Thus, for all $x \in \mathcal{X}$, $J^{(h_d)}_{\lambda_{d+1}}(x) = J^{({h}_{d, y})}_{\lambda_{d+1}}(x)$ and, therefore, 
 		\begin{equation*}
 		D^{(h_d)}(x) + \lambda_{d+1} N^{(h_d)}(x) 
 		= 
 		D^{({h}_{d, y})}(x) + \lambda_{d+1} N^{(h_{d, y})}(x).
 		\end{equation*}
 		As a result, $\lambda_{d+1} = \mu_{d,y}(x)$ for all $x \in \Lambda_{d, y}$. 
 	\end{enumerate}
\end{proof}
}

Theorem \ref{Thm:Widx} suggests a method for identifying the Whittle index of \textcolor{black}{any indexable RB} by iteratively identifying the set ${\cal P}_d$ and the Whittle index $\lambda_d$. By definition, ${\cal P}_0 = \emptyset$ and $\lambda_0 = -\infty$. Now suppose ${\cal P}_0 \subset {\cal P}_1 \subset \ldots \subset {\cal P}_{d}$ and $\lambda_0 < \lambda_1 < \ldots < \lambda_d$ have been identified. We will describe how to determine ${\cal P}_{d+1}$ and $\lambda_{d+1}$.
\begin{enumerate}
	\item For $h_d = {\bar g}^{({\cal P}_d)}$, compute $N^{(h_d)}$ by solving \eqref{eqn:DN_LinComp}.
	\item For all $y \in {\cal X}\backslash{\cal P}_d$, compute $N^{(h_{d, y})}$ where $h_{d, y} = \bar{g}^{({\cal P} \cup \{y\})}$ by solving \eqref{eqn:DN_LinComp} and compute $\Lambda_{d, y}$. Let $\mu^*_{d, y} = \min_{x \in \Lambda_{d, y}} \mu_{d, y}(x)$ where $\mu_{d,y}(x)$ is given by Theorem~\ref{Thm:Widx}. Then, $\lambda_{d+1} = \min_{y \in {\cal X}\backslash{\cal P}_d} \mu^*_{d, y}$, \textcolor{black}{$\Gamma_{d+1} = \arg \min_{y \in {\cal X}\backslash{\cal P}_d} \mu^*_{d, y}$, and we get ${\cal P}_{d+1} = {\cal P}_d \cup \Gamma_{d+1}$} (recall that argmin denotes the set of all minimizers) and $w(x) = \lambda_{d+1}$, \textcolor{black}{$\forall x \in \Gamma_{d+1}$}.
\end{enumerate}
Iteratively proceeding this way, we can compute the Whittle index for all states. The detailed algorithm is presented in Algorithm~\ref{Alg:Widx_comp}.

\begin{algorithm}[!t]
  \DontPrintSemicolon
  \SetKwInOut{Input}{input}
  \Input{RB~$(\mathcal{X}, \{0, 1\}, {P(a)}_{a \in \{0, 1\}}, c, x_0)$,
  discount factor $\beta$.} 
  Initialize $d = 0$ and $\mathcal P_0 = \emptyset$. \;
  \While{$\mathcal{P}_d \neq \mathcal{X}$}{
  	Compute $\Lambda_{d, y}$ and $\mu_{d,y}(x)$ using~\eqref{eq:mu}, $\forall y \in \mathcal{X}\setminus \mathcal{P}_d$. \;
    Compute $\mu^*_{d,y} = \min_{x \in \Lambda_{d,y}} \mu_{d,y}(x)$, $\forall y \in \mathcal{X}\setminus \mathcal{P}_d$. \;
    Compute $\lambda_{d+1} = \min_{y \in {\cal X}\backslash {\cal P}_d} \mu^*_{d,y}$. \;
    Compute $\Gamma_{d+1} = \arg \min_{y \in {\cal X}\backslash {\cal P}_d} \mu^*_{d,y}$. \;
    Set $w(z) = \lambda_{d+1}$, $\forall z \in \Gamma_{d+1}$. \;
    Set ${\cal P}_{d+1} = {\cal P}_d \cup \Gamma_{d+1}$. \;
	Set $d = d+1$. \;
  }
  \caption{Computing Whittle index of all states of an indexable RB}
  \label{Alg:Widx_comp}
\end{algorithm}


\subsection{An efficient implementation using Sherman-Morrison formula} \label{subsec:eff}

\textcolor{black}{We now present an efficient implementation of Algorithm~\ref{Alg:Widx_comp} using the Sherman-Morrison inverse formula. Suppose $A \in \mathbb {R} ^{n\times n}$ is an invertible square matrix, $u,v\in \mathbb {R} ^{n}$ are column vectors, such that $A + u v^{\textsf{T}}$ is invertible. Then, the Sherman-Morrison inverse formula is 
  \begin{equation}\label{eq:SM}
    \bigl(A+uv^{\textsf {T}}\bigr)^{-1}=A^{-1}-{\frac{A^{-1}uv^{\textsf {T}}A^{-1}} {1+v^{\textsf {T}}A^{-1}u}}. 
\end{equation}
  Furthermore, given $b \in \mathbb{R}^n$, if $x$ is the solution of $Ax = b$ and $y$ is the solution $Ay = u$, then the solution of $(A + u v^{\textsf{T}}) \tilde x = b$ is given by (see~\cite[Corollary 2]{egidi2006sherman})
\begin{equation}\label{eq:SM-soln}
  \tilde x = x - \frac{v^{\mathsf{T}} x}{1 + v^{\mathsf{T}} y} y.
\end{equation}}

\textcolor{black}{Note that for any Markov policy~$g$, $(I - \beta P^{(g)})$ is invertible because $\beta P^{(g)}$ is a sub-stochastic matrix and has a spectral radius less than~$1$. Therefore, the conditions of using the Sherman-Morrison formula are satisfied. Hence, using~\eqref{eq:SM-soln}, Eq.~\eqref{eqn:DN_LinComp} may be written (in matrix form) as
\begin{equation} \label{eqn:DN_PhiComp}
	D^{(g)} = (1 - \beta) \Phi^{(g)} c^{(g)} \text{ and } N^{(g)} = (1 - \beta) \Phi^{(g)} g.
\end{equation}}

\textcolor{black}{Now, for any $d \in \{0, \ldots, K_D-1\}$ and a state~$y \in {\cal X} \backslash {\cal P}_d$, consider policies~$h_d = \bar{g}^{({\cal P}_d)}$ and $h_{d,y} = \bar{g}^{({\cal P}_{d,y})}$.}
\textcolor{black}{Let $e_y$ denote the unit vector with $1$ in the $y$-th location and $\rho_y$ be a vector given by
  \( [\rho_y]_x = P_{yx}(1) - P_{yx}(0) \), for all $x \in {\cal X}$.
Then, $P^{(h_{d,y})} = P^{(h_d)} - e_y \rho^{\mathsf{T}}_y$. Therefore, 
\[ I - \beta P^{(h_{d,y})} = \bigl( I - \beta P^{(h_{d})} \bigr) + \beta e_y \rho^{\mathsf{T}}_y. \]
Let $\Phi^{(h_{d})}_{\cdot y}$ denote the $y$-th column of $\Phi^{(h_{d})}$, i.e., $\Phi^{(h_{d})}_{\cdot y} = \Phi^{(h_{d})} e_y$. Then, by Sherman-Morrison inverse formula~\eqref{eq:SM} and~\eqref{eq:SM-soln}, we have
\begin{gather}
	\Phi^{(h_{d,y})} = \Phi^{(h_{d})} -  \dfrac{\beta\Phi^{(h_{d})} e_y \rho^{\mathsf{T}}_y \Phi^{(h_{d})} }{1 + \beta \rho^{\mathsf{T}}_y \Phi^{(h_{d})}_{\cdot y}} 
    \label{eqn:phi_update}
    \\ 
	D^{(h_{d,y})}  = D^{(h_{d})} - \dfrac{\beta\rho^{\mathsf{T}}_y D^{(h_{d})} }{1 + \beta\rho^{\mathsf{T}}_y \Phi^{(h_{d})}_{\cdot y}} \Phi^{(h_{d})}_{\cdot y},  \quad
	N^{(h_{d,y})}  = N^{(h_{d})} - \dfrac{\beta\rho^{\mathsf{T}}_y N^{(h_{d})} }{1 + \beta\rho^{\mathsf{T}}_y \Phi^{(h_{d})}_{\cdot y}} \Phi^{(h_{d})}_{\cdot y}. \label{eqn:DN_update}
\end{gather}
Thus, if $\Phi^{(h_{d})}$ has been computed, then $\Phi^{(h_{d,y})}$ can be computed in $\OO(K^2)$ computations. In addition, if $\Phi^{(h_{d})}$, $D^{(h_{d})}$ and $N^{(h_{d})}$ have been computed, then $D^{(h_{d,y})}$ and $N^{(h_{d,y})}$ can be computed in $\OO(K)$ computations.}

\textcolor{black}{So, we can use \eqref{eqn:phi_update} and \eqref{eqn:DN_update} to implement Alg.~\ref{Alg:Widx_comp} in a more efficient manner. However, there is one additional step that needs to be handled, which we explain next.}

\textcolor{black}{As $h_{d+1} = \bar{g}^{({\cal P}_d \cup {\Gamma_{d+1}})}$, we get 
\begin{equation} \label{eqn:phi_update_new}
  I - \beta P^{(h_{d+1})} = \bigl(I - \beta P^{(h_d)}\bigr) + \beta \sum_{y \in \Gamma_{d+1}} e_y \rho^{\mathsf{T}}_y.
\end{equation}
Thus, $I - \beta P^{(h_{d+1})}$ is a rank-$|\Gamma_{d+1}|$ update of $I - \beta P^{(h_{d})}$. When $|\Gamma_{d+1}| > 1$, we can either 
sequentially apply equations \eqref{eqn:phi_update} and \eqref{eqn:DN_update} for all $y \in \Gamma_{d+1}$ or use the Woodbury formula\footnote{Suppose $A \in \mathbb{R}^{n\times n}$ is an invertible matrix and $U, V \in \mathbb{R}^{n \times m}$ are such that $A + U V^{\mathsf{T}}$ is invertible. Then the Woodbury formula is
\[ (A + UV^{\mathsf{T}})^{-1} = A^{-1} - A^{-1} U(I + V^{\mathsf{T}} U)^{-1} V^{\mathsf{T}} A^{-1}.\]} 
to compute $\Phi^{(h_{d+1})}$ and compute $D^{(h_{d+1})}$ and $N^{(h_{d+1})}$ using~\eqref{eqn:DN_PhiComp}. The complexity of sequentially applying the Sherman-Morrison formula is $\OO(|\Gamma_{d+1}|K^2)$ to compute $\Phi^{(h_{d+1})}$ and $\OO(|\Gamma_{d+1}| K)$ to compute $D^{(h_{d+1})}$ and $N^{(h_{d+1})}$. The complexity of using the Woodbury formula is $\OO(|\Gamma_{d+1}|^{2.807} + K^2)$ to compute $\Phi^{(h_{d+1})}$ and $\OO(K^2)$ to compute $D^{(h_{d+1})}$ and $N^{(h_{d+1})}$.} 

\textcolor{black}{We show the complete algorithm to efficiently compute the Whittle index in Algorithm~\ref{Alg:Widx_comp_eff}, where we use sequential application of Sherman-Morrison formula to compute $\Phi^{(h_{d+1})}$, $D^{(h_{d+1})}$ and $N^{(h_{d+1})}$.}

\begin{algorithm}[!t]
	\DontPrintSemicolon
	\SetKwInOut{Input}{input}
	\Input{RB~$(\mathcal{X}, \{0, 1\}, {P(a)}_{a \in \{0, 1\}}, c, x_0)$,
		discount factor $\beta$.} 
	Initialize $d = 0$, $\mathcal P_0 = \emptyset$, $h_0 = \boldsymbol{1}_{K}$. \;
    Compute $\Phi^{(h_0)} = (I - \beta P^{(h_0)})^{-1}$ and $[D^{(h_0)}\; N^{h_0}] = (1 - \beta) \Phi^{(h_0)} [c^{(h_0)}\; h_0]$ \;
	\While{$\mathcal{P}_d \neq \mathcal{X}$}{
		\ForAll{$y \in \mathcal{X}\setminus \mathcal{P}_d$}{
			Compute $D^{(h_{d, y})}$ and $N^{(h_{d, y})}$ using \eqref{eqn:DN_update}. \;
			Compute $\Lambda_{d, y}$ and $\mu_{d,y}(x)$ for all $x \in \Lambda_{d, y}$ using~\eqref{eq:mu}. \;
			Compute $\mu^*_{d,y} = \min_{x \in \Lambda_{d,y}} \mu_{d,y}(x)$. \;
		}
		Compute $\lambda_{d+1} = \min_{y \in {\cal X}\backslash {\cal P}_d} \mu^*_{d,y}$. \;
		Compute $\Gamma_{d+1} = \arg\min_{y \in {\cal X}\backslash {\cal P}_d} \mu^*_{d,y}$. \;
		Set $w(z) = \lambda_{d+1}$, $\forall z \in \Gamma_{d+1}$. \;
		Set ${\cal P}_{d+1} = {\cal P}_d \cup \Gamma_{d+1}$. \;
		Initialize $\Phi^{(h_{d+1})} = \Phi^{(h_{d})}$, $D^{(h_{d+1})} = D^{(h_{d})}$ and $N^{(h_{d+1})} = N^{(h_{d})}$. \;
		\ForAll{$z \in \Gamma_{d+1}$}{
			Compute $\Phi^{(h_{d+1,z})}$, $D^{(h_{d+1,z})}$ and $N^{(h_{d+1,z})}$ by using \eqref{eqn:phi_update} and \eqref{eqn:DN_update}. \;
			Update $\Phi^{(h_{d+1})} = \Phi^{(h_{d+1,z})}$, $D^{(h_{d+1})} = D^{(h_{d+1,z})}$ and $N^{(h_{d+1})} = N^{(h_{d+1,z})}$.
		}
		Set $d = d+1$. \;
	}
	\caption{Computing Whittle index of all states of an indexable RB}
	\label{Alg:Widx_comp_eff}
\end{algorithm}


\paragraph*{Some remarks}
\begin{enumerate}
  \item The idea of computing the index by iteratively sorting the states
    according to their index is commonly used in the algorithms to
    compute Gittins index; for example, the largest-remaining-index algorithm,
    the state-elimination algorithm, the triangularization algorithm, and the
    fast-pivoting algorithm use variations of this idea.
    See~\cite{chakravorty2014multi} for details.

  \item \textcolor{black}{The computational complexity of Algorithm~\ref{Alg:Widx_comp_eff} is $\OO(K^3)$, which can be characterized as follows. The algorithm starts with computing $\Phi^{(h_0)}$ which requires $\OO(K^{2.807})$ computations (using Strassen's algorithm) and $D^{(h_0)}$ and $N^{(h_0)}$ each of which requires $\OO(K^2)$ computations.
  Then, in the inner for loop, computing each of $D^{(h_{d, y})}$, $N^{(h_{d, y})}$ and $\mu^*_{d,y}$ requires $\OO(K)$ computations and the inner loop is executed $|{\cal X} \backslash {\cal P}_d|$ times.
  Afterwards, updating $\Phi^{(h_{d+1})}$, $D^{(h_{d+1})}$ and $N^{(h_{d+1})}$ requires $\OO(|\Gamma_{d+1}|K^2)$, $\OO(|\Gamma_{d+1}|K)$ and $\OO(|\Gamma_{d+1}|K)$ computations, repectively.
  Therefore, the computational complexity of the algorithm is
  \begin{align*}
    \hskip 1em & \hskip -1em
    \OO(K^{2.807}) + \OO(K^2) + \sum_{d = 1}^{K_D} \left( \OO(|{\cal X} \backslash {\cal P}_d| K) + \OO(|\Gamma_{d+1}| K^2) + \OO(2 |\Gamma_{d+1}| K) \right) \\
    &\le \OO(K^{2.807}) + \sum_{d = 1}^{K_D} \OO(K^2) + \OO\biggl(\biggl[ \sum_{d = 1}^{K_D} |\Gamma_{d+1}| \biggr] K^2\biggr)  \\
    &\le \OO(K^{2.807}) + \OO(K^3) + \OO(K^3) 
    \le \OO(K^3),
  \end{align*}
where the first inequality uses the fact that $|\mathcal{X}\setminus\mathcal{P}_d| \le K$ and the second inequality uses the fact that $\sum_{d=1}^{K_D}| \Gamma_{d+1}| = K$.} 

    
  \item \textcolor{black}{Note that Algorithm~\ref{Alg:Widx_comp_eff} computes the Whittle index exactly. In contrast, using binary search~\cite{akbarzadeh2019dynamic} computes the Whittle index approximately. Let $C_{\max}$ and $C_{\min}$ denote the upper and lower bound on the per-step cost. Then, we know that for any state $x$, $w(x) \in [ C_{\min}, C_{\max} ]$. Now, suppose we want to compute the Whittle index to an accuracy of $\delta$. Then the interval $[C_{\min}, C_{\max}]$ needs to be divided into $\log_2( (C_{\max} - C_{\min})/\delta)$ steps. For each step of the binary search, we need to solve the dynamic program~\eqref{eqn:bllmn_vf}. Solving the dynamic program exactly using policy iterations has a complexity of $\OO(K^3)$. Solving it approximately using value iteration has a complexity of $\OO(N_{\textup{VI}}K^2)$, where $N_{\textup{VI}}$ is the number of iterations for value iteration (see~\cite{feinberg2020complexity} for bounds on $N_{\textup{VI}}$). Note that the binary search needs to be repeated for each state. Thus, using binary search to compute Whittle index to an accuracy of $\delta$ has a complexity $\OO(\log_2( (C_{\max} - C_{\min})/\delta) N_{\textup{VI}} K^4)$ if the dynamic program at each step is solved exactly and has a complexity of $\OO(\log_2( (C_{\max} - C_{\min})/\delta) N_{\textup{VI}} K^3)$ if the dynamic program at each step is solved approximately.}
    
\end{enumerate}

\subsection{Discussion on PCL-indexability} \label{subsec:pcl}

As mentioned earlier, an algorithm very similar to Alg.~\ref{Alg:Widx_comp}
was proposed in~\cite{nino2007dynamic} for computing the Whittle index for RBs
that satisfy a technical condition known as PCL-indexability. The analysis 
in~\cite{nino2007dynamic} is done under the assumption that the system starts
from a designated start state distribution~$\pi_0$. For any policy $g$, define
$\mathsf{N}^{(g)} = \sum_{x \in \mathcal{X}} N^{(g)}(x) \pi_0(x)$ and define
$\mathsf{D}^{(g)} = \sum_{x \in \mathcal{X}} D^{(g)}(x) \pi_0(x)$. Let $x_1,
\dots, x_K$ be a permutations of state space such that the corresponding Whittle
indices are \textcolor{black}{non-decreasing}: $\lambda_1 \le \dots \le \lambda_K$. For any $k \in \{1, \dots, K\}$,
let $\bar{\mathcal{P}}_k$ denote the set $\{ x_1, \dots, x_k \}$.

Now for any $k \in \{1, \dots, K\}$, and all states $y \in \mathcal{X}
\setminus \bar{\mathcal{P}}_k$, define $\bar h_k = \bar
g^{(\bar{\mathcal{P}}_k)}$, $\bar h_{k,y} = \bar g^{(\bar{\mathcal{P}}_k \cup
\{y\})}$, and define
\begin{equation}\label{eq:MPI}
  \bar \mu_{k,y} = \frac{ \mathsf{D}^{(\bar h_{k,y})} - \mathsf{D}^{(\bar h_k)} }
  {\mathsf{N}^{(\bar h_k)} - \mathsf{N}^{(\bar h_{k,y})}}.
\end{equation}

In \cite{nino2007dynamic} an algorithm, called
the adaptive greedy algorithm, is presented to iteratively identify the sets
$\bar{\mathcal{P}}_k$ and compute the corresponding Whittle indices. This
algorithm is shown in Alg.~\ref{Alg:PCL}.

\begin{algorithm}[!t]
  \DontPrintSemicolon
  \SetKwInOut{Input}{input}
  \Input{RB~$(\mathcal{X}, \{0, 1\}, {P(a)}_{a \in \{0, 1\}}, c, x_0)$,
  discount factor $\beta$.} 
  Initialize $k = 0$ and $\mathcal P_0 = \emptyset$ \;
  \While{$k \neq K$}{
    \ForAll{$y \in \mathcal{X}\setminus \mathcal{P}_k$}{
    Compute $\bar \mu_{d,y}$ using~\eqref{eq:MPI}\;
  }
  Pick $x_{k+1} \in \arg \min_{y \in {\cal X}\backslash \bar{\cal P}_k} \bar
  \mu_{k,y}$ \;
  Set $w(s_{k+1}) = \min_{y \in {\cal X}\backslash \bar{\cal P}_k} \bar \mu_{k,y}$  
  and $\bar{\cal P}_{k+1} = \bar{\cal P} \cup \{x_k\}$ \;
	$k = k+1$ \;
  }
  \caption{The Adaptive Greedy Algorithm of \cite{nino2007dynamic}}
  \label{Alg:PCL}
\end{algorithm}

A RB to be PCL-indexable \cite{nino2007dynamic} if it satisfies the following conditions:
\begin{enumerate}
  \item For any ${\cal S} \subseteq {\cal X}$ and $y \in {\cal
    X}\backslash{\cal S}$, we have
    $\mathsf{N}^{(\bar{g}^{({\cal S})})} - \mathsf{N}^{(\bar{g}^{({\cal S} \cup
    \{y\})})} > 0$.
	\item The sequence of index values produced by the adaptive greedy
      algorithm  is monotonically non-decreasing.
\end{enumerate}

Finally, the following result is established:
\begin{theorem}[Theorem 1 of \cite{nino2007dynamic}] \label{Thm:PCL}
	A PCL-indexable RB is indexable and the adaptive greed algorithm gives its Whittle indices.
\end{theorem}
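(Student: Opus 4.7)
The strategy is to reduce Theorem~\ref{Thm:PCL} to Theorem~\ref{Thm:Widx} by establishing two things: (a)~every PCL-indexable RB is indexable in the sense of Definition~\ref{def:indexability}, and (b)~on a PCL-indexable RB, Alg.~\ref{Alg:PCL} produces the same sequence of indices as the refinement Alg.~\ref{Alg:Widx_comp}, and these indices coincide with the Whittle indices.

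For step (a), I would lift the geometric picture of Lemma~\ref{lem:V-cts} to the $\pi_0$-aggregate. Define $\mathsf{V}_\lambda := \sum_{x \in \mathcal{X}} \pi_0(x) V_\lambda(x)$; then $\mathsf{V}_\lambda$ is the lower concave envelope of the affine family $\{\mathsf{J}^{(g)}_\lambda = \mathsf{D}^{(g)} + \lambda \mathsf{N}^{(g)}\}_g$, by summing~\eqref{eq:J-split} against $\pi_0$. PCL condition~1 says that $\mathsf{N}^{(\bar g^{({\cal S})})}$ is strictly decreasing as states are added to ${\cal S}$, so the candidate threshold-type policies $\bar g^{(\bar{\mathcal P}_k)}$ have pairwise distinct slopes. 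PCL condition~2 forces the breakpoints $\bar\lambda_1 \le \cdots \le \bar\lambda_K$ produced by Alg.~\ref{Alg:PCL} to be non-decreasing, so the sequence $\{\bar g^{(\bar{\mathcal P}_k)}\}_k$ is exactly the sequence of active pieces of the lower concave envelope $\mathsf{V}_\lambda$, each being the unique minimizer on $(\bar\lambda_k, \bar\lambda_{k+1}]$. A state-wise lifting argument then shows that each $\bar g^{(\bar{\mathcal P}_k)}$ is an optimal policy of Problem~\ref{prob:decompose} on that same interval, from which the passive sets are nested and indexability follows.

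For step (b), I would proceed by induction on~$k$. The base case $\bar{\mathcal P}_0 = \emptyset = {\mathcal P}_0$ is trivial. For the inductive step, assume $\bar{\mathcal P}_k$ coincides with the set of states whose Whittle index is at most $\bar\lambda_k$. The quantity $\bar\mu_{k,y}$ in~\eqref{eq:MPI} is precisely the $\lambda$-value at which the two affine functions $\mathsf{J}^{(\bar h_k)}_\lambda$ and $\mathsf{J}^{(\bar h_{k,y})}_\lambda$ cross (again by summing~\eqref{eq:J-split} against $\pi_0$); condition~1 ensures the denominator does not vanish. Taking $\min_y \bar\mu_{k,y}$ identifies the next vertex of $\mathsf{V}_\lambda$, which by step~(a) coincides with the next true Whittle index, and the minimizer $x_{k+1}$ satisfies $w(x_{k+1}) = \bar\mu_{k, x_{k+1}}$.

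The hardest part will be reconciling the aggregate ratio $\bar\mu_{k,y}$ used by Alg.~\ref{Alg:PCL} with the state-wise ratios $\mu_{k,y}(x)$ used by Alg.~\ref{Alg:Widx_comp}. In general $\bar\mu_{k,y}$ is only a $\pi_0$-weighted average of $\{\mu_{k,y}(x)\}_{x \in \Lambda_{k,y}}$, so in principle the aggregate minimizer need not equal $\mu^*_{k,y} = \min_{x \in \Lambda_{k,y}} \mu_{k,y}(x)$. The crux is that PCL condition~2, via the envelope argument in step~(a), forces the trajectory of Alg.~\ref{Alg:PCL} to pass only through pairs $(k,y)$ along which all state-wise crossings $\mu_{k,y}(x)$ agree with their common aggregate value, so that $\bar\mu_{k,y} = \mu^*_{k,y}$ at every iteration. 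Once this equivalence is pinned down, the outputs of the two algorithms match and the theorem follows directly from Theorem~\ref{Thm:Widx}.
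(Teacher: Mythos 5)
This statement is not proved in the paper at all: it is quoted, with attribution, as Theorem~1 of \cite{nino2007dynamic}, and serves only as background for the comparison between Alg.~\ref{Alg:PCL} and Alg.~\ref{Alg:Widx_comp}. So there is no in-paper proof to compare against; you have set yourself the task of reproving Nino-Mora's result, and your sketch, while it correctly identifies the geometric picture (aggregate lines $\mathsf{D}^{(g)} + \lambda \mathsf{N}^{(g)}$ and their lower concave envelope) and correctly locates the difficulties, does not close them.

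The central gap is in your step~(a). The two PCL conditions constrain only the nested family $\{\bar g^{(\bar{\mathcal P}_k)}\}_{k}$ and the single-state perturbations $\bar h_{k,y}$ generated by the adaptive greedy recursion; they say nothing directly about the other deterministic stationary policies. Your claim that condition~2 makes this family ``exactly the sequence of active pieces of the lower concave envelope $\mathsf{V}_\lambda$'' is precisely the content of Nino-Mora's theorem, and it does not follow from monotonicity of the breakpoints alone --- one must show that no policy outside the nested family dips below the envelope traced by it. This is where the partial-conservation-law / work-decomposition machinery (or an explicit exchange argument over the achievable region) enters, and your proposal replaces it with the phrase ``a state-wise lifting argument then shows.'' Relatedly, even granting aggregate optimality of $\bar g^{(\bar{\mathcal P}_k)}$, passing to state-wise optimality of Problem~\ref{prob:decompose} (which is what Definition~\ref{def:indexability} requires) needs $\pi_0$ to have full support; you do not address states with $\pi_0(x)=0$. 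In step~(b), the assertion that condition~2 forces $\bar\mu_{k,y} = \mu^*_{k,y}$ at every iteration is again unproven: for the \emph{winning} $y$ the paper's own Theorem~\ref{Thm:Widx} (part~2) shows all state-wise ratios collapse to $\lambda_{d+1}$, so the aggregate trivially agrees there, but for non-winning $y$ the aggregate ratio is a weighted mediant of the $\mu_{k,y}(x)$ whose denominators need not all share a sign (condition~1 only controls the $\pi_0$-aggregate denominator), so it need not dominate $\lambda_{d+1}$ for the reason you give. You have identified the right obstacles; the proof of this particular theorem, however, lives in \cite{nino2007dynamic}, and the paper's own contribution (Theorem~\ref{Thm:Widx}) is deliberately structured so as not to depend on it.
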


The main differences between our result and~\cite{nino2007dynamic} are as
follows:
\begin{enumerate}
  \item An implication of the first condition in the definition of PCL
    indexability is that the denominator in~\eqref{eq:MPI} is never zero. In
    contrast, we do not impose such a restriction and work with the non-empty subset of
    states for which the denominator in~\eqref{eq:mu} is non-zero.
  \item In Alg.~\ref{Alg:PCL}, the sets $\{ \bar{\cal P}_k \}_{k=1}^{K}$ are
    constructed by adding states one-by-one, even when $\bar \mu_{k,y}$ has
    multiple argmins. In contrast, in Alg.~\ref{Alg:Widx_comp}, the sets $\{
    \mathcal{P}_d \}_{d=1}^{K_D}$ are constructed by adding all states which
    have the same Whittle index at once.
  \item In Alg.~\ref{Alg:PCL}, one has to check that the indices are generated
    in a nondecreasing order (which is the second condition of
    PCL-indexability). In contrast, Alg.~\ref{Alg:Widx_comp}, the indices are
    always generated in an increasing order and, therefore, 
    condition~2 of PCL-indexability is always satisfied. 
  \item Finally, Theorem~\ref{Thm:PCL} only guarantees that Alg.~\ref{Alg:PCL}
    computes the Whittle index for RBs which satisfy PCL-indexability.
    Moreover, the second condition in PCL-indexability can only be checked
    after running Alg.~\ref{Alg:PCL}. In contrast, Theorem~\ref{Thm:Widx}
    guarantees that Alg.~\ref{Alg:Widx_comp} computes the Whittle index for
    all indexable RBs.
\end{enumerate}

We conclude this discussion by revisiting an example from \cite{nino2007dynamic} which is
an indexable RB but not PCL-indexable. For this
example, $\mathcal{X} = \{1, 2, 3\}$, the transition matrices are
\( P(0) = \left[\begin{smallmatrix} 0.3629 & 0.5028 & 0.1343 \\ 0.0823 & 0.7534 & 0.1643 \\ 0.2460 & 0.0294 & 0.7246
  \end{smallmatrix}\right]\)
  and
\(
P(1) = \left[\begin{smallmatrix} 0.1719 & 0.1749 & 0.6532 \\ 0.0547 & 0.9317 & 0.0136 \\ 0.1547 & 0.6271 & 0.2182
\end{smallmatrix}\right] \), the per-step cost is
$c(x,0) = 0$ for all $x \in \mathcal{X}$ and $c(1, 1) = -0.44138$, $c(2, 1) = -0.8033$, $c(3, 1) = -0.14257$ , and $\beta = 0.9$.  and the corresponding Whittle indices are $[0.18, 0.8, 0.57]$. 

This model is not PCL-indexable since if $g = [1, 1, 0]$ and $h = [0, 1, 0]$, then $N^{(g)} = [5.66, 8.24, 4.23]$ and $N^{(h)} = [6.65, 8.59, 4.88]$. Therefore, for any initial state distribution~$\pi_0$, $\mathsf{N}^{(g)} < \mathsf{N}^{(h)}$. 

\textcolor{black}{However, as the problem is indexable, we can still apply Alg.~\ref{Alg:Widx_comp} to compute Whittle indices without any limitations. The steps are as follows:
\begin{enumerate}
	\item Initialize $d=0$ and have ${\cal P}_0 =
		\emptyset$. Thus ${\bar g}^{({\cal P}_0)} = [1, 1, 1]$ and we compute $N^{({\bar g}^{({\cal
					P}_0)})} = [10, 10, 10]$ and $D^{({\bar g}^{({\cal P}_0)})} = [-6.43, -7.43, -6.51]$. 
	\item There are three possibilities for $y \in \mathcal{X}
		\setminus \mathcal{P}_0 = \{1,2,3\}$:
	\begin{itemize}
		\item For $y = 1$, $h_{0,1} = [0, 1, 1]$. We compute 
			$N^{(h_{0,1})} = [7.88, 9.29, 9.13]$ and $D^{(h_{0,1})} = [-6.05, -7.30, -6.35]$. Therefore, $\Lambda_{0,1} =
			\{x \in \mathcal{X} : N^{(\bar g(\mathcal{P}_0)}(x) \neq
			N^{(h_{0,1})}(x) \} = \{1, 2, 3\}$. Now for each $x \in
			\Lambda_{0,1}$, we compute $\mu_{0,1}(1) = \mu_{0,1}(2) =
			\mu_{0,1}(3) = 0.18$. Therefore, $\mu^*_{0,1} = 0.18$.
		\item For $y = 2$, $h_{0,2} = [1, 0, 1]$. We compute 
			$N^{(h_{0,2})} = [4.58, 2.93, 4.10]$ and $D^{(h_{0,2})} = [-1.27, -0.7, -0.89]$. Therefore, $\Lambda_{0,2} =
			\{x \in \mathcal{X} : N^{(\bar g^{(\mathcal{P}_0)}}(x) \neq
			N^{(h_{0,2})}(x) \} = \{1, 2, 3\}$. Now for each $x \in
			\Lambda_{0,2}$, we compute $\mu_{0,2}(1) = \mu_{0,2}(2) =
			\mu_{0,2}(3) = 0.95$. Therefore, $\mu^*_{0,2} = 0.95$.
		\item For $y = 3$, $h_{0,3} = [1, 1, 0]$. We compute 
			$N^{(h_{0,3})} = [5.66, 8.24, 4.23]$ and $D^{(h_{0,3})} = [-3.64, -6.30, -2.79]$. Therefore, $\Lambda_{0,3} =
			\{x \in \mathcal{X} : N^{(\bar g^{(\mathcal{P}_0)}}(x) \neq
			N^{(h_{0,3})}(x) \} = \{1, 2, 3\}$. Now for each $x \in
			\Lambda_{0,3}$, we compute $\mu_{0,3}(1) = \mu_{0,3}(2) =
			\mu_{0,3}(3) = 0.64$. Therefore, $\mu^*_{0,3} = 0.64$.
	\end{itemize}
	Now $\lambda_1 = \min \{ \mu^*_{0,1}, \mu^*_{0,2},
		\mu^*_{0,3} \} = 0.18$. Therefore, ${\cal P}_1 = \{1\}$, $w(1) = 0.18$,
		${\bar g}^{({\cal P}_1)} = [0, 1, 1]$. We have already computed $N^{({\bar g}^{({\cal P}_1)})} = [7.88,
		9.29, 9.13]$ and $D^{({\bar g}^{({\cal P}_1)})} = [-6.05, -7.30, -6.35]$.
	\item There are two possibilities for $y \in \mathcal{X}
		\setminus \mathcal{P}_1 = \{2, 3\}$:
	\begin{itemize}
		\item For $y = 2$, $h_{1,2} = [0, 0, 1]$. We compute 
			$N^{(h_{1,2})} = [1.48, 1.52, 2.57]$ and $D^{(h_{1,2})} = [-0.21, -0.22, -0.37]$. Therefore, $\Lambda_{1,2} =
			\{x \in \mathcal{X} : N^{(\bar g^{(\mathcal{P}_1)}}(x) \neq
			N^{(h_{1,2})}(x) \} = \{1, 2, 3\}$. Now for each $x \in
			\Lambda_{1,2}$, we compute $\mu_{1,2}(1) = \mu_{1,2}(2) =
			\mu_{1,2}(3) = 0.91$. Therefore, $\mu^*_{1,2} = 0.91$.
		\item For $y = 3$, $h_{1,3} = [0, 1, 0]$. We compute 
			$N^{(h_{1,3})} = [6.65, 8.59, 4.88]$ and $D^{(h_{1,3})} = [-1.22, -0.66, -0.83]$. Therefore, $\Lambda_{1,3} =
			\{x \in \mathcal{X} : N^{(\bar g^{(\mathcal{P}_1)}}(x) \neq
			N^{(h_{1,3})}(x) \} = \{1, 2, 3\}$. Now for each $x \in
			\Lambda_{1,3}$, we compute $\mu_{1,3}(1) = \mu_{1,3}(2) =
			\mu_{1,3}(3) = 0.57$. Therefore, $\mu^*_{1,3} = 0.57$.
	\end{itemize}
	Now $\lambda_2 = \min \{ \mu^*_{1,2},
		\mu^*_{1,3} \} = 0.57$. Therefore, ${\cal P}_2 = \{1, 3\}$, $w(3) = 0.57$,
		${\bar g}^{({\cal P}_2)} = [0, 1, 0]$. We have already computed $N^{({\bar g}^{({\cal P}_2)})} = [6.65, 8.59, 4.88]$ and $D^{({\bar g}^{({\cal P}_2)})} = [-1.22, -0.66, -0.83]$.
	\item There is only one possibility for $y \in \mathcal{X}
		\setminus \mathcal{P}_2 = \{2\}$:
	\begin{itemize}
		\item For $y = 3$, $h_{3,2} = [0, 0, 0]$, $N^{(h_{3,2})} = [0, 0, 0]$ and $D^{(h_{3,2})} = [-0.21, -0.22, -0.37]$. Therefore, $\Lambda_{3,2} =
			\{x \in \mathcal{X} : N^{(\bar g^{(\mathcal{P}_2)}}(x) \neq
			N^{(h_{3,2})}(x) \} = \{1, 2, 3\}$. Now for each $x \in
			\Lambda_{3,2}$, we compute $\mu_{3,2}(1) = \mu_{3,2}(2) =
			\mu_{3,2}(3) = 0.8$. Therefore, $\mu^*_{3,2} = 0.8$.
	\end{itemize}
	Now $\mu^*_{3,2} = 0.57$. Therefore, ${\cal P}_3 = \{1, 2, 3\}$ and $w(2) = 0.8$.
\end{enumerate}
Finally, the Whittle indices are $[0.18, 0.8, 0.57]$.}

\section{Some special cases}\label{sec:special}

In this section, we refine the results developed in this paper 
to some special cases. 

\subsection{Restless bandits with optimal threshold-based policy}

Consider a RB $(\mathcal{X}, \{0, 1\}, \allowbreak \{P(a)\}_{a \in \{0, 1\}},
c, x_0)$ where the state space $\mathcal{X}$ is a totally ordered set\textcolor{black}{, i.e., ${\cal X} = \{1, \ldots, K\}$}.
Let $\mathcal{X}_0 = \{0, \dots, K\}$ and 
let $\mathcal{X}_{\ge \ell}$ denotes the set of states greater than or equal
to state~$\ell$ and $\mathcal{X}_{\le \ell}$ denotes the set of states less
than or equal to state~$\ell$. We suppose that the model satisfies the following assumption:
\begin{itemize}
  \item[\textup{(P)}] There exists a non-decreasing family of thresholds
    $\{\ell_\lambda\}_{\lambda \in \mathbb{R}}$, $\ell_\lambda \in
    \mathcal{X}_0$, such that the threshold policy $g^{(\ell_\lambda)}$ is
    optimal for Problem~\ref{prob:decompose} with activation
    cost~$\lambda$.
\end{itemize}
Several models where (P) holds have been considered in the literature~\cite{ansell2003whittle, glazebrook2009index,
avrachenkov2013congestion, glazebrook2013monotone, wang2019whittle,
akbarzadeh2019restless}. A key implication of property~(P) is the following: 
\begin{lemma}\label{lem:threshold}
  Suppose a RB defined on a totally ordered state space satisfies property~(P).
  Then, the restless bandit is indexable and the Whittle index $w(\ell)$ is
  non-decreasing in~$\ell \in {\cal X}$. 
\end{lemma}
\begin{proof}
  Note that property~(P) implies that the passive set
  $\Pi_\lambda = \{ x \in \mathcal{X} : g_\lambda(x) = 0 \} = \mathcal{X}_{\le
  \ell_\lambda}$, which is increasing in~$\lambda$. Hence the RB is indexable.
  Moreover, for any state~$\ell$, the Whittle index $w(\ell)$ is the smallest value
  of~$\lambda$ such that $\ell_\lambda = \ell$. Therefore, by Property~(P),
  $w(\ell)$ is non-decreasing in~$\ell$.
\end{proof}

\textcolor{black}{%
As in Section~\ref{sec:computation}, we assume that there are $K_D (\leq K)$ distinct Whittle indices given by $\Lambda^* = \{\lambda_1, \dots, \lambda_{K_D}\}$ where $\lambda_1 < \lambda_2 < \dots \lambda_{K_D}$. We also let $\lambda_0 = -\infty$ and for any $d \in \{0, \dots, K_D\}$, let $\mathcal{P}_d = \{ x \in \mathcal X : w(x) \le \lambda_d \}$. As stated in the proof of Lemma~\ref{lem:threshold} property~(P) implies that $\mathcal{P}_d = \mathcal{X}_{\le \ell_{\lambda_d}}$. Therefore, $\Gamma_{d+1} = \{\ell_{\lambda_d} + 1, \dots, \ell_{\lambda_{d+1}}\}$. Thus, Theorem~\ref{Thm:Widx} simplifies as follows:
\begin{corollary}\label{Cor:P}
  Suppose a RB defined on a totally ordered state space satisfies property~(P). Then, the following properties hold:
	\begin{enumerate}
      \item For any $y \in \Gamma_{d+1}$,
        the set $\Lambda_{d, y}$ is non-empty.
		\item For any $x \in \Lambda_{d, y}$, $\mu_{d,y}(x) \geq \lambda_{d+1}$ with equality if and only if $y \in  \Gamma_{d+1}$.
	\end{enumerate}
\end{corollary}}


\textcolor{black}{Thus, based on Corollary~\ref{Cor:P}, for models that satisfy property~(P), we can simplify Algorithm~\ref{Alg:Widx_comp_eff} as shown in Algorithm~\ref{Alg:Widx_comp_threshold}. Instead of computing $\mu^*_{d,y}$ for all $y \in \mathcal{X} \setminus \mathcal{P}_d$, we can compute it sequentially and break the loop when $\mu^*_{d,y} \neq \lambda_{d+1}$. Note that this simplification does not change the asymptotic complexity of the algorithm, which is still $\OO(K^3)$.}

\begin{algorithm}[!t]
	\DontPrintSemicolon
	\SetKwInOut{Input}{input}
	\Input{RB~$(\mathcal{X}, \{0, 1\}, {P(a)}_{a \in \{0, 1\}}, c, x_0)$, discount factor $\beta$.} 
    Initialize $d = 0$, $\ell = 0$, $h_0 = \boldsymbol{1}_{K}$. \;
    Compute $\Phi^{(h_0)} = (I - \beta P^{(h_0)})^{-1}$, $[D^{(h_0)}\; N^{h_0}] = (1 - \beta) \Phi^{(h_0)} [c^{(h_0)}\; h_0]$ \;
	\While{$\ell \le K$}{
      \ForAll{$y \in \{\ell + 1, \dots, K\}$}{
			Compute $D^{(h_{d, y})}$ and $N^{(h_{d, y})}$ using \eqref{eqn:DN_update}. \;
			Compute $\Lambda_{d, y}$ and $\mu_{d,y}(x)$ for all $x \in \Lambda_{d, y}$ using~\eqref{eq:mu}. \;
			Compute $\mu^*_{d,y} = \min_{x \in \Lambda_{d,y}} \mu_{d,y}(x)$. \;
            \eIf{$y = \ell + 1$}{
              Set $\lambda_{d+1} = \mu^*_{d,y}$ and
              $\Gamma_{d+1} = \{ y \}$ \;
              Set $w(y) = \lambda_{d+1}$ \;
            }{
              \eIf{$\lambda_{d+1} = \mu^*_{d,y}$}
              {Update $\Gamma_{d+1} = \Gamma_{d+1} \cup \{ y \}$ \;
              Set $w(y) = \lambda_{d+1}$ \;
              }
              {Set $\ell = y$ \; break}
            }
		}
		Initialize $\Phi^{(h_{d+1})} = \Phi^{(h_{d})}$, $D^{(h_{d+1})} = D^{(h_{d})}$ and $N^{(h_{d+1})} = N^{(h_{d})}$. \;
		\ForAll{$z \in \Gamma_{d+1}$}{
			Compute $\Phi^{(h_{d+1,z})}$, $D^{(h_{d+1,z})}$ and $N^{(h_{d+1,z})}$ by using \eqref{eqn:phi_update} and \eqref{eqn:DN_update}. \;
			Update $\Phi^{(h_{d+1})} = \Phi^{(h_{d+1,z})}$, $D^{(h_{d+1})} = D^{(h_{d+1,z})}$ and $N^{(h_{d+1})} = N^{(h_{d+1,z})}$.
		}
		Set $d = d+1$. \;
	}
    \caption{Whittle index for RB with optimal threshold-based policy}
	\label{Alg:Widx_comp_threshold}
\end{algorithm}

\textcolor{black}{
  \begin{remark}
    Note that if the model satisfies additional assumptions such that it is known upfront that no two states have the same Whittle index, then we don't need the inner for loop (over~$y$) in Algorithm~\ref{Alg:Widx_comp_threshold}, and can simply compute the Whittle index of state~$\ell$ as
    \[
      w(\ell) = \min \frac{D^{(\bar g^{(\mathcal{X}_{\le \ell + 1})})}(x) - D^{(g^{(\mathcal{X}_{\le \ell })})}(x)}
      {N^{(\bar g^{(\mathcal{X}_{\le \ell})})}(x) - N^{(g^{(\mathcal{X}_{\le \ell + 1 })})}(x)},
    \]
    where the minimum is over all $x$ such that the denominator is no zero.
  \end{remark}
}

In the next section, we present a new model called
\emph{stochastic monotone bandits}, which may be considered as a
generalization of monotone bandits~\cite{glazebrook2013monotone,
ansell2003whittle, avrachenkov2013congestion}, and show that these models
satisfy property~(P).

\subsection{Stochastic monotone bandits}

We say that the RB is \emph{stochastic monotone} if it satisfies the
following conditions.
\begin{enumerate}
    \setlength{\itemsep}{0.25\baselineskip}
  \setlength{\parskip}{0pt}
  \item[\textup{(D1)}] For any $a \in \{0, 1\}$, $P(a)$ is stochastically monotone, i.e., for any $x, y \in \mathcal{X}$ such that $x < y$, we have $\sum_{w \in \mathcal X_{\ge z}} P_{xw}(a) \leq \sum_{w \in \mathcal X_{\ge z}} P_{yw}(a)$ for any $z \in \mathcal{X}$.
  \item[\textup{(D2)}] For any $z \in \mathcal{X}$, $S_{zx}(a) := \sum_{w \in
    \mathcal X_{\ge z}} P_{xw}(a)$ in submodular\footnote{Given ordered sets
      $\mathcal{X}$ and $\mathcal{Y}$, a function $f: \mathcal{X} \times
      \mathcal{Y} \to \mathbb{R}$ is called submodular if for any $x_1, x_2
      \in \mathcal{X}$ and $y_1, y_2 \in \mathcal{Y}$ such that $x_2 \geq x_1$
      and $y_2 \geq y_1$, we have $f(x_1, y_2) - f(x_1, y_1) \geq f(x_2, y_2)
    - f(x_2, y_1)$.} in $(x, a)$.
  \item[\textup{(D3)}] For any $a \in \{ 0, 1 \}$, $c(x, a)$ is non-decreasing in $x$. 
  \item[\textup{(D4)}] $c(x, a)$ is submodular in $(x, a)$.
\end{enumerate}


For ease of notation, for any $\ell \in \mathcal{X}_0$, 
we let $g^{(\ell)} = \bar g^{(\mathcal X_{\le \ell})}$ denote a policy with
threshold~$\ell$ (where $\bar{g}^{(\cal S)}$ is as defined
in~\eqref{eqn:gbar}). 

\begin{lemma} \label{lem:monotone}
  A stochastic monotone RB satisfies the following properties:
  \begin{enumerate}
    \item For any $\lambda \in \mathbb{R}$, there exists a threshold
      $\ell_\lambda \in \mathcal{X}^*$ such that the thershold
      policy~$g^{(\ell_\lambda)}$ is optimal for Problem~\ref{prob:decompose}.
      If there are multiple such thresholds, we use $\ell_\lambda$ to denote
      the largest threshold. 
    \item If, for any $x \in \mathcal{X}$, $N^{(g^{(\ell)})}(x)$ is
      non-increasing in $\ell$, then $\ell_\lambda$ is non-decreasing with
      $\lambda$. Therefore, the model satisfies property~(P) 
      and is, therefore, indexable.
  \end{enumerate}
\end{lemma}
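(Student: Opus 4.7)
My plan is to prove the two parts separately. Part~1 follows from the standard monotone MDP argument applied to Problem~\ref{prob:decompose} with the augmented cost $c_\lambda$, while Part~2 is an interchange argument exploiting the decomposition $J^{(g)}_\lambda = D^{(g)} + \lambda N^{(g)}$ together with the hypothesis on $N^{(g^{(\ell)})}$.

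For Part~1, I would first show by value iteration that $V_\lambda(x)$ is non-decreasing in~$x$. The base iterate is trivially non-decreasing, and if $V^{(k)}_\lambda$ is non-decreasing, then (D3) gives that $c_\lambda(x, a) = c(x, a) + \lambda a$ is non-decreasing in~$x$ (the term $\lambda a$ is $x$-independent), while (D1) gives, by the standard characterization of stochastic monotonicity, that $\sum_{y} P_{xy}(a) V^{(k)}_\lambda(y)$ is non-decreasing in~$x$; the minimum of two non-decreasing functions is non-decreasing, so $V_\lambda$ inherits the property in the limit. Next I would show that $H_\lambda(x, a)$ is submodular in $(x, a)$. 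The cost term is submodular by (D4), since adding the modular $\lambda a$ preserves submodularity. For the transition term I would use Abel summation to rewrite, for any non-decreasing $f$,
\[
\sum_{y \in \mathcal{X}} P_{xy}(a) f(y) = f(1) + \sum_{z = 2}^{|\mathcal{X}|} \bigl( f(z) - f(z-1) \bigr) S_{zx}(a),
\]
with $S_{zx}(a)$ as in (D2). Applying this to $f = V_\lambda$, the coefficients $V_\lambda(z) - V_\lambda(z-1)$ are non-negative by the first step, so the submodularity of each $S_{zx}(a)$ in $(x, a)$ from (D2) transfers through a non-negative combination to yield submodularity of $\sum_{y} P_{xy}(a) V_\lambda(y)$ in $(x, a)$. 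Hence $H_\lambda(x, a)$ is submodular, equivalently $H_\lambda(x, 1) - H_\lambda(x, 0)$ is non-increasing in~$x$; defining $\ell_\lambda$ as the largest state at which passive is still optimal (and $\ell_\lambda = 0$ if active is optimal at every state), the optimal policy is $g^{(\ell_\lambda)}$.

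For Part~2, I would argue by contradiction via an interchange. Fix $\lambda' < \lambda''$ and let $\ell' = \ell_{\lambda'}$, $\ell'' = \ell_{\lambda''}$ be the corresponding largest optimal thresholds. Suppose for contradiction that $\ell' > \ell''$. Using optimality at each $\lambda$ together with the decomposition~\eqref{eq:J-split}, for every $x \in \mathcal{X}$,
\begin{align*}
D^{(g^{(\ell')})}(x) + \lambda' N^{(g^{(\ell')})}(x) &\le D^{(g^{(\ell'')})}(x) + \lambda' N^{(g^{(\ell'')})}(x), \\
D^{(g^{(\ell'')})}(x) + \lambda'' N^{(g^{(\ell'')})}(x) &\le D^{(g^{(\ell')})}(x) + \lambda'' N^{(g^{(\ell')})}(x).
\end{align*}
Adding yields $(\lambda'' - \lambda')\bigl(N^{(g^{(\ell'')})}(x) - N^{(g^{(\ell')})}(x)\bigr) \le 0$, so $N^{(g^{(\ell'')})}(x) \le N^{(g^{(\ell')})}(x)$. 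The hypothesis that $N^{(g^{(\ell)})}(x)$ is non-increasing in~$\ell$, combined with $\ell' > \ell''$, gives the reverse inequality, so equality holds for every $x$. Substituting back forces $D^{(g^{(\ell')})}(x) = D^{(g^{(\ell'')})}(x)$ as well, and thus $J^{(g^{(\ell')})}_\lambda \equiv J^{(g^{(\ell'')})}_\lambda$ for all $\lambda$. In particular $g^{(\ell')}$ is optimal at $\lambda''$, contradicting the choice of $\ell_{\lambda''} = \ell''$ as the \emph{largest} optimal threshold, since $\ell' > \ell''$. Hence $\ell_\lambda$ is non-decreasing in $\lambda$, which is exactly property~(P) of Lemma~\ref{lem:threshold}, and the rest of the conclusion (indexability and the closed-form Whittle index) follows from that lemma.

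The main obstacle is the submodularity of $\sum_{y} P_{xy}(a) V_\lambda(y)$ in $(x, a)$ in Part~1. Assumptions (D1) and (D2) are designed precisely to make the Abel-summation rewrite work, and chaining it with the induction that $V_\lambda$ is non-decreasing is the conceptually important step; the remaining pieces (submodularity of a finite binary choice giving a threshold minimizer, convergence of value iteration) are standard monotone-MDP manipulations. The interchange in Part~2 is clean once the tiebreaking convention (largest optimal threshold) is used consistently.
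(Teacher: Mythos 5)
Your proposal is correct and follows essentially the same route as the paper: Part~1 is the standard monotone-MDP/submodularity argument (the paper simply cites Theorem~4.7.4 of Puterman, whose hypotheses are exactly (D1)--(D4), whereas you write out the value-iteration and Abel-summation details), and Part~2 rests on the same observation that, under the hypothesis on $N^{(g^{(\ell)})}$, the map $(\ell,\lambda) \mapsto J^{(g^{(\ell)})}_\lambda(x) = D^{(g^{(\ell)})}(x) + \lambda N^{(g^{(\ell)})}(x)$ has decreasing differences. The paper invokes the monotone-selection theorem (Puterman, Theorem~2.8.2) at that point, while your interchange-by-contradiction is just the direct proof of that theorem specialized to this setting, handled correctly with the largest-threshold tie-breaking.
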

\begin{proof}
  For the first part, we note that 
  conditions \textup{(D1)}--\textup{(D4)} are the same as the properties
  of~\cite[Theorem 4.7.4]{puterman2014markov}, which implies that there exists a threshold based
  \endgraf
  For the second part, we first show that for any $\ell
  \in \mathcal{X}^*$, $J^{(g^{(\ell)})}_\lambda(x)$ is submodular in
  $(\ell,\lambda)$ for all $x \in \mathcal{X}$. In particular, for any $k <
  \ell$, we have
  \[
    J^{(g^{(\ell)})}_\lambda(x) - J^{(g^{(k)})}_\lambda(x) = 
    D^{(g^{(\ell)})}_\lambda(x) - D^{(g^{(k)})}_\lambda(x) +
    \lambda( N^{(g^{(\ell)})}_\lambda(x) - N^{(g^{(k)})}_\lambda(x)).
  \]
  Now (D5) implies that the difference $J^{(g^{(\ell)})}_\lambda(x) -
  J^{(g^{(k)})}_\lambda(x)$ is non-increasing in $\lambda$. Therefore,
  $J^{(g^{(\ell)})}_\lambda(x)$ is submodular in $(\ell,\lambda)$. Consequently,
  from~\cite[Theorem 2.8.2]{puterman2014markov}, $\ell_\lambda = \max \{ \ell' \in
  \arg\min_{\ell \in \mathcal{X}^*} J^{(g^{(\ell)})}_\lambda(x) \}$ is
  non-decreasing in~$\lambda$.
\end{proof}

\subsection{Restless bandits with controlled restarts}

Consider restless bandits with controlled restarts (i.e., models where
$P_{xy}(1)$ does not depend on $x$). By Proposition~\ref{prop:1}c, such models
are indexable. In this section, we explain how to simplify the computation of the
Whittle index for such models. For ease of notation, we use $P_{xy}$ to denote
$P_{xy}(0)$ and $Q_y$ to denote $P_{xy}(1)$. 

Define
$
  \mathsf{D}^{(g)} = \sum_{x \in \mathcal{X}} Q_x D^{(g)}(x)
  \quad\text{and}\quad
  \mathsf{N}^{(g)} = \sum_{x \in \mathcal{X}} Q_x N^{(g)}(x).
$
Now, following the discussion of Sec.~\ref{sec:computation}, we can show that
the result of
Theorem~\ref{Thm:Widx} continues to holds when $\mu_{d,y}$ is replaced by 
\textcolor{black}{
\[ \hat{\mu}_{d,y} = \dfrac{\mathsf{D}^{(h_y)} - \mathsf{D}^{(\bar{g}^{({\cal P}_d)})}} {\mathsf{N}^{(\bar{g}^{({\cal P}_d)})} - \mathsf{N}^{(h_y)}}. \]
Therefore, we can replace $\mu_{d,y}(x)$ in
Algorithm~\ref{Alg:Widx_comp} by $\hat{\mu}_{d,y}$.} Our key result for this
section is  $\mathsf{D}^{(g)}$ and $\mathsf{N}^{(g)}$ can be computed efficiently for models with controlled
restarts.

For that matter, given any policy $g$, let $\tau_g$ denote the hitting time of
the set $\Pi^{(g)} = \{x \in \mathcal{X} : g(x) = 1 \}$.  Let 
\[
  \mathsf{L}^{(g)} \coloneqq \EXP\Bigl[
  \sum_{t=0}^{\tau_g} \beta^t c(X_t, g(X_t)) \Bigm| X_0 \sim Q \Bigr] 
  \quad\text{and}\quad
  \mathsf{M}^{(g)} \coloneqq \EXP\Bigl[
  \sum_{t=0}^{\tau_g} \beta^t \Bigm| X_0 \sim Q \Bigr]
\]
denote the expected discounted cost and expected discounted time for hitting
$\Pi^{(g)}$ starting with an initial state distribution of $Q$. 
Then, using ideas from renewal theory, we can show the following.
\begin{theorem}\label{Thm:DNC_funcs}
  For any policy $g$,
	\begin{equation*}
	\mathsf{D}^{(g)}  = \dfrac{\mathsf{L}^{(g)}}{\mathsf{M}^{(g)}}
	\quad\text{and}\quad
	\mathsf{N}^{(g)} = \dfrac{1}{\beta \mathsf{M}^{(g)}} - \dfrac{1-\beta}{\beta}. 
	\end{equation*}
\end{theorem}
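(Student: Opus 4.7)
My proof plan hinges on the controlled-restart hypothesis $P_{xy}(1) = Q_y$: whenever the active action is taken at any state, the next state is drawn from $Q$ independently of the past. Starting with $X_0 \sim Q$ and following policy $g$, the process takes passive action until it first hits $\Pi^{(g)}$ at (random) time $\tau_g$, then takes the active action, which triggers a fresh sample from $Q$. This makes the trajectory \emph{regenerative}: define i.i.d.\ cycle lengths $\tau^{(1)}, \tau^{(2)}, \dots$ with the same law as $\tau_g$, and renewal epochs $T_0 = 0$, $T_k = T_{k-1} + \tau^{(k)} + 1$. Between epochs $T_{k-1}$ and $T_k - 1$ the sampled path is stochastically identical to a fresh run of length $\tau_g + 1$ starting from $Q$.

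Since $X_0 \sim Q$, I would rewrite $\mathsf{D}^{(g)} = (1-\beta)\,\EXP\bigl[\sum_{t=0}^\infty \beta^t c(X_t, g(X_t))\bigr]$ by partitioning the infinite sum into cycles:
\[
\EXP\Bigl[\sum_{t=0}^\infty \beta^t c(X_t, g(X_t))\Bigr]
= \EXP\Bigl[\sum_{k=1}^\infty \beta^{T_{k-1}} \sum_{s=0}^{\tau^{(k)}} \beta^s c(X_{T_{k-1}+s}, g(X_{T_{k-1}+s}))\Bigr].
\]
By the regenerative structure and monotone convergence, each inner cycle cost has mean $\mathsf{L}^{(g)}$ and is independent of $T_{k-1}$, and $\EXP[\beta^{T_k}] = (\beta\,\EXP[\beta^{\tau_g}])^k$. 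Setting $\alpha \coloneqq \beta\,\EXP[\beta^{\tau_g}]$ and summing the geometric series gives $\EXP[\sum_t \beta^t c(X_t, g(X_t))] = \mathsf{L}^{(g)}/(1-\alpha)$. An identical decomposition, using that $g(X_t)=0$ on each cycle except at the terminal step where it equals $1$, yields $\EXP[\sum_t \beta^t g(X_t)] = \EXP[\beta^{\tau_g}]/(1-\alpha)$.

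The remaining work is purely algebraic: evaluating $\mathsf{M}^{(g)} = \EXP[\sum_{t=0}^{\tau_g}\beta^t] = \EXP[(1-\beta^{\tau_g+1})/(1-\beta)]$ gives $(1-\beta)\mathsf{M}^{(g)} = 1 - \alpha$, hence $\EXP[\beta^{\tau_g}] = \bigl(1-(1-\beta)\mathsf{M}^{(g)}\bigr)/\beta$. Substituting into $\mathsf{D}^{(g)} = (1-\beta)\mathsf{L}^{(g)}/(1-\alpha)$ produces $\mathsf{L}^{(g)}/\mathsf{M}^{(g)}$, and substituting into $\mathsf{N}^{(g)} = (1-\beta)\,\EXP[\beta^{\tau_g}]/(1-\alpha)$ produces $1/(\beta \mathsf{M}^{(g)}) - (1-\beta)/\beta$, as claimed.

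The main obstacle is the renewal decomposition itself: I need to justify that the conditional law of $(X_{T_{k-1}+s})_{s=0}^{\tau^{(k)}}$ given $\mathcal{F}_{T_{k-1}}$ coincides with the law of the first cycle, which uses the strong Markov property at the stopping time $T_{k-1}$ together with the fact that the active action decouples $X_{T_{k-1}}$ from the past by drawing it from $Q$. Once this independence is in hand, Fubini/monotone convergence and the geometric sum computation are routine, and the claimed identities follow by the algebra above.
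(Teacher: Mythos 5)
Your proof is correct and rests on the same renewal-theoretic idea as the paper's: the restart at $\tau_g$ makes the process regenerative under $X_0 \sim Q$, and the identity $(1-\beta)\mathsf{M}^{(g)} = 1 - \beta\,\EXP[\beta^{\tau_g}]$ does the algebraic work. The only cosmetic difference is that you unroll the regeneration into an explicit geometric series over cycles, whereas the paper writes the equivalent one-step renewal equation (e.g.\ $\mathsf{D}^{(g)} = (1-\beta)\mathsf{L}^{(g)} + \EXP[\beta^{\tau_g+1}]\,\mathsf{D}^{(g)}$) and solves the fixed point directly.
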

\begin{proof}
  The proof follows from standard ideas in renewal theory. By strong Markov property, we have
  \begin{align}
    \mathsf{D}^{(g)} &= \EXP \Big[ (1-\beta) \sum_{t = 0}^{\tau_g} \beta^{t} c(X_t, g(X_t)) 
    + \beta^{\tau_g+1} \mathsf{D}^{(g)} \Bigm| X_0 \sim Q \Big] \nonumber \\
    &= (1-\beta) \mathsf{L}^{(g)} + \EXP[ \beta^{\tau_g+1} | X_0 \sim Q ] \mathsf{D}^{(g)}. \label{eqn:D_exp}
  \end{align}
  Using $\mathsf{M}^{(g)}$ definition, we have $\EXP[ \beta^{\tau_g+1} | X_0 \sim Q ] = 1 - (1-\beta) \mathsf{M}^{(g)}$.
  Substituting this in~\eqref{eqn:D_exp} and rearranging the terms we get $\mathsf{D}^{(g)} = \mathsf{L}^{(g)}/\mathsf{M}^{(g)}$.

  For $\mathsf{N}^{(g)}$, by strong Markov property we have
  \begin{align*}
  \mathsf{N}^{(g)} &= \EXP\Big[ (1-\beta) \beta^{\tau_g} + \beta^{\tau_g+1} \mathsf{N}^{(g)} \Bigm| X_0 \sim Q \Big] \nonumber \\
  &= \EXP[ \beta^{\tau_g} | X_0 \sim Q ] (1-\beta + \beta \mathsf{N}^{(g)}) = \dfrac{ 1-(1-\beta)\mathsf{M}^{(g)} }{\beta}(1-\beta + \beta \mathsf{N}^{(g)}).
\end{align*}
Therefore, we get $\mathsf{N}^{(g)} = \bigl(1 - (1-\beta)
\mathsf{M}^{(g)}\bigr) / \beta \mathsf{M}^{(g)}$.
\end{proof}
 
Given any policy $g$, we can efficiently compute $\mathsf{L}^{(g)}$ and
$\mathsf{M}^{(g)}$ using standard formulas for truncated Markov chains. For
any vector $v$, let $v^{(g)}$ denote the vector with components indexed by the
set $\{ x \in \mathcal{X} : g(x) = 0 \}$ and $\tilde v^{(g)}$ denote the
remaining components. For example, if $\mathcal{X} = \{1, 2, 3, 4\}$, $g = (1,
0, 1, 0)$, and $v = [1, 2, 3, 4]$, then $v^{(g)} = (2, 4)$ and $\tilde v^{(g)}
= (1,3)$. Similarly, for any square matrix $Z$, let $Z^{[g]}$ denote the
square sub-matrix corresponding to elements $\{ x \in \mathcal{X} : g(x) = 0
\}$, and $\tilde Z^{[g]}$ denote the sub-matrix with rows $\{x \in \mathcal{X}
: g(x) = 0 \}$ and columns $\{x \in \mathcal{X} : g(x) = 1 \}$. 
\footnote{For example, if $g = [1, 0, 1, 0]$ and if $Z = \left[\begin{smallmatrix} 1 & 2 & 3 & 4 \\ 5 & 6 & 8 & 8 \\ 9 & 10 & 11 & 12 \\ 13 & 14 & 15 & 16 \end{smallmatrix}\right]$, then $Z^{[g]} = \left[\begin{smallmatrix} 6 & 8 \\ 14 & 16 \end{smallmatrix}\right]$ and $Z^{[g]} = \left[\begin{smallmatrix} 5 & 8 \\ 13 & 15 \end{smallmatrix}\right]$.}.
Then, from standard formulas for truncated Markov chains, we have the following.
\begin{proposition}
  For any policy $g$, let $c_0$ and $c_1$ denote column vectors corresponding
  to $c(\cdot, 0)$ and $c(\cdot, 1)$. Then,
  \begin{align*}
    \mathsf L^{(g)} &= Q^{(g)} (I - \beta P^{[g]})^{-1}
    (c_0^{(g)} + \beta \tilde P^{[g]} \tilde c^{(g)}_1) 
    + \tilde Q^{(g)} \tilde c^{(g)}_1, 
    \\
    \mathsf M^{(g)} &= Q^{(g)} (I - \beta P^{[g]})^{-1}
    (\mathbf{1}^{(g)} + \beta \tilde P^{[g]} \tilde{\mathbf{1}}^{(g)}) 
    + \tilde Q^{(g)} \tilde {\mathbf{1}}^{(g)}. 
  \end{align*}
\end{proposition}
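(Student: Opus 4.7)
The plan is to derive both formulas by first-step analysis (i.e., the strong Markov property applied at time~$1$) followed by matrix inversion, separating the contribution of $X_0$ into the passive-set piece and the active-set piece.

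First I would condition on whether $X_0$ lies in the passive set $\{x : g(x) = 0\}$ or the active set $\Pi^{(g)} = \{x : g(x) = 1\}$. Under the event $\{g(X_0) = 1\}$ we have $\tau_g = 0$, so the sum in the definition of $\mathsf{L}^{(g)}$ collapses to $c(X_0,1)$; weighting by the initial distribution $Q$ and summing over the active states contributes exactly $\tilde Q^{(g)} \tilde c^{(g)}_1$, which is the trailing term in the formula for $\mathsf L^{(g)}$. An identical argument with cost $1$ in place of $c$ gives the trailing term $\tilde Q^{(g)} \tilde{\mathbf 1}^{(g)}$ for $\mathsf M^{(g)}$.

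Next, for $x$ with $g(x) = 0$, define $\ell(x) \coloneqq \EXP\bigl[\sum_{t=0}^{\tau_g}\beta^t c(X_t,g(X_t)) \mid X_0 = x\bigr]$. A one-step decomposition gives $\ell(x) = c(x,0) + \beta \sum_{y : g(y)=0} P_{xy}\, \ell(y) + \beta \sum_{y : g(y)=1} P_{xy}\, c(y,1)$, where the last term uses the fact that hitting the active set at time~$1$ stops the sum after paying $\beta c(X_1,1)$. Collecting the passive components of $\ell$ into the vector $\ell^{(g)}$, this reads
\begin{equation*}
\ell^{(g)} \;=\; c_0^{(g)} + \beta P^{[g]} \ell^{(g)} + \beta \tilde P^{[g]} \tilde c^{(g)}_1,
\end{equation*}
so $\ell^{(g)} = (I - \beta P^{[g]})^{-1}\bigl(c_0^{(g)} + \beta \tilde P^{[g]} \tilde c^{(g)}_1\bigr)$. (Invertibility of $I - \beta P^{[g]}$ is standard: $\beta P^{[g]}$ has spectral radius at most $\beta < 1$ since $P^{[g]}$ is a substochastic matrix.) Multiplying on the left by $Q^{(g)}$ and adding the boundary contribution from the active-set initial states yields the claimed expression for $\mathsf L^{(g)}$. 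The same recursion with $c_0$ and $c_1$ both replaced by the all-ones vector produces the formula for $\mathsf M^{(g)}$.

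There is no serious obstacle here; the argument is entirely routine first-step analysis combined with the controlled-restart property, which is implicitly used only through the fact that the initial distribution at renewal epochs is $Q$ (so that starting the expectations with $X_0 \sim Q$ is the right object to analyze). The only mildly delicate point is bookkeeping with the sub-vector/sub-matrix notation $v^{(g)},\tilde v^{(g)},P^{[g]},\tilde P^{[g]}$, and in particular remembering that once the chain moves into the active set the sum contributes a single cost term at the hitting time, which is precisely what the cross-block term $\beta \tilde P^{[g]}\tilde c^{(g)}_1$ encodes.
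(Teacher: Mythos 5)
Your proof is correct and is exactly the ``standard formulas for truncated Markov chains'' argument that the paper invokes without detail: first-step analysis on the passive block yields the linear system $\ell^{(g)} = c_0^{(g)} + \beta P^{[g]}\ell^{(g)} + \beta \tilde P^{[g]}\tilde c_1^{(g)}$, and the active-set initial states contribute the boundary terms $\tilde Q^{(g)}\tilde c_1^{(g)}$ and $\tilde Q^{(g)}\tilde{\mathbf 1}^{(g)}$. Your bookkeeping with the sub-vector and sub-matrix notation matches the paper's conventions, and the substochasticity remark correctly justifies invertibility of $I - \beta P^{[g]}$.
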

This gives us an efficient method to compute $\mathsf{L}^{(g)}$ and
$\mathsf{M}^{(g)}$, which can in turn be used to compute $\mathsf{D}^{(g)}$
and $\mathsf{N}^{(g)}$ and used in a modified version of
Algorithm~\ref{Alg:Widx_comp} as explained.

\section{Numerical Experiments} \label{sec:simulation}

In this section, we evaluate how well the Whittle index policy (\textsc{wip})
performs compared to the optimal policy (\textsc{opt}) as well as to a
baseline policy known as the myopic policy (\textsc{myp}) (shown in
Algorithm~\ref{alg:MYP}). The code is also available\footnote{\url{https://codeocean.com/capsule/8680851/tree/v1}}.
\begin{algorithm}[!t]
  \DontPrintSemicolon
  \SetKwInOut{Input}{input}
  \Input{Set $\mathcal{N}$ of arms; arms $m$ to be activated}
  \ForEach{time $t$}{
    let $\ell = 0$, $\mathcal{M} = \emptyset$, and $\mathcal{Z} = \mathcal{N}$. \;
    \ForEach{$\ell \in \{0, \ldots, m\}$}{
      $i^*_\ell \in \arg\min_{i \in \mathcal{Z}} \sum_{j \in \mathcal{Z}\setminus \{i\}}
      \{c^j(X^j_t, 0) + c^i(X^i_t, 1) \}$ 
      \tcp*{Pick any arg min}
      let $\mathcal{M} = \mathcal{M} \cup \{i^*_\ell\}$,
      $\mathcal{Z} = \mathcal{Z} \setminus \{i^*_\ell\}$ \; 
    }
    Activate arms in $\mathcal{Z}$\;
  }
  \caption{Myopic Heuristic} \label{alg:MYP}
\end{algorithm}

\subsection{Experimental Setup} \label{sebsec:setup}

In our experiments, we consider restart bandits with $P(1)= [\mathbf{1},
\mathbf{0}, \dots, \mathbf{0}]$. There are two other components of the model:
The transition matrix $P(0)$ and the cost function~$c$. We choose these
components as follows.

\subsubsection{The choice of transition matrices.}
We have three setups for choosing $P(0)$. The first setup is a
family of $4$ types of structured stochastic monotone matrices, which we
denote by ${\cal P}_{\ell}(p)$, $\ell \in \{1, \ldots, 4\}$, where $p \in [0,
1]$ is a parameter of the model. The second setup is a randomly generated
stochastic monotone matrices which we denote by ${\cal R}(d)$, where $d \in
[0, 1]$ is a parameter of the model. In the third setup, we generate random stochastic
matrices using Levy distribution. The details of
these models are presented in the supplementary material.

\subsubsection{The choice of the cost function.}
For all our experiments we choose $c(x, 0) = (x-1)^2$ and $c(x, 1) =
0.5(|\mathcal{X}|-1)^2$. 

\subsection{Experimental details and result}
We conduct different experiments to compare the performance of Whittle index
with the optimal policy and the myopic policy for different setups
(described in Section~\ref{sebsec:setup}) and for different sizes 
$|\mathcal{X}|$ of the state space, the number $n$ of the arms, and the number $m$ of
active arms. For all experiments we choose the discount factor $\beta = 0.95$. 

We evaluate the performance of a policy via Monte Carlo simulations over $S$
trajectories, where each trajectory is of length $T$.
In all our experiments, we choose $S
= 2500$ and $T = 250$.

\begin{figure}[!t]
	\centering
	\begin{subfigure}[b]{0.225\textwidth}   
		\centering 
		\includegraphics[page=1]{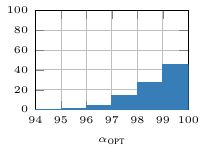}
		\caption[]%
		{{$m = 1$.}}    
		\label{fig:301}
	\end{subfigure}
	\qquad
	\begin{subfigure}[b]{0.225\textwidth}   
		\centering 
		\includegraphics[page=2]{histograms.pdf}
		\caption[]%
		{{$m = 2$.}}    
		\label{fig:302}
	\end{subfigure}
	\caption[]
	{Relative performance $\alpha_{\textsc{opt}}$
		of \textsc{wip} versus \textsc{opt} for Experiment~$2$.} 
	\label{fig:30}
\end{figure}
\begin{figure}[!t]
	\centering
	\begin{subfigure}[b]{0.22\textwidth}
		\centering
		\includegraphics[page=1,width=0.95\linewidth]{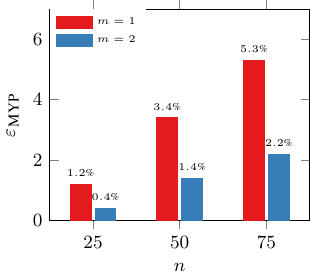}
		\caption[]%
		{{$\ell = 1$}}    
		\label{fig:25}
	\end{subfigure}
	\quad
	\begin{subfigure}[b]{0.225\textwidth}  
		\centering 
		\includegraphics[page=2,width=0.95\linewidth]{bars.pdf}
		\caption[]%
		{{$\ell = 2$}}    
		\label{fig:26}
	\end{subfigure}
	\quad
	\begin{subfigure}[b]{0.22\textwidth}   
		\centering 
		\includegraphics[page=3,width=0.95\linewidth]{bars.pdf}
		\caption[]%
		{{$\ell = 3$}}    
		\label{fig:27}
	\end{subfigure}
	\quad
	\begin{subfigure}[b]{0.225\textwidth}   
		\centering 
		\includegraphics[page=4,width=0.95\linewidth]{bars.pdf}
		\caption[]%
		{{$\ell = 4$}}    
		\label{fig:28}
	\end{subfigure}
	\caption[]
	{Relative improvement $\varepsilon_{\textsc{myp}}$ of
		\textsc{wip} vs.\ \textsc{myp}
		for Experiment~$3$.} 
	\label{fig:set5-8}
\end{figure}

\subsubsection*{\textbf{Experiment 1)} Comparison of Whittle index with the
optimal policy for structured models.}
The optimal policy is computed by solving the MDP for Problem~\ref{prob:main}, which is feasible only for small values of $|\mathcal{X}|$ and $n$.
We choose $|\mathcal{X}| = 5$ and $n = 5$
and compare the two policies for model ${\cal P}_{\ell}(\cdot)$, $\ell \in
\{1, \ldots, 4\}$ and $m \in \{1, 2\}$. 

For a given value of $n$ and $\ell$, we pick $n$ equispaced points $(p_1, \dots, p_n)$ in the interval $[0.35, 1]$ and choose $\mathcal{P}_{\ell}(p_i)$ as the transition matrix of arm~$i$. We observed that 
\(
\alpha_{\textsc{opt}} = J(\textsc{opt})/J(\textsc{wip})
\),
the relative (percentage) performance improvement of \textsc{wip} compared to \textsc{opt}, was in the range of 99.95\%--100\% for all parameters.

\subsubsection*{\textbf{Experiment 2)} Comparison of Whittle index with the
optimal policy for randomly sampled models.}

As before, we pick $|\mathcal{X}| = 5$ and $n = 5$ so that it is feasible to
calculate the optimal policy. For each arm, we sample the transition matrix
from ${\cal R}(5/|\mathcal{X}|)$ and repeat the experiment $250$ times. The histogram of $\alpha_{\textsc{opt}}$ over experiments for $m \in \{1, 2\}$ is shown in Fig~\ref{fig:30}, which show that \textsc{wip} performs close to \textsc{opt} in all cases.

\subsubsection*{\textbf{Experiment 3)} Comparison of Whittle index with the
myopic policy for structured models.} 
We generate the structured models as in Experiment~$1$ but for $|\mathcal{X}| = 25$,
$n \in \{25, 50, 75\}$, and $m \in \{1, 2, 5\}$. In this case, let
\(
\varepsilon_{\textsc{myp}} = (
J(\text{\textsc{myp}})-J(\text{\textsc{wip}}))/J(\text{\textsc{myp}})
\)
denote the relative improvement of \textsc{wip} compared to \textsc{myp}. The
results of $\varepsilon_{\textsc{myp}}$ for different choice of the parameters
are shown in Fig~\ref{fig:set5-8}.

In Fig~\ref{fig:set5-8}, we observe that \textsc{wip} performs considerably
better than \textsc{myp}. In addition to that, performance of \textsc{wip} is
better with respect to \textsc{myp} when $\ell = 4$ which is more complicated
than models where $\ell \in \{1, 2, 3\}$. However, increasing $m$ doesn't
necessarily contribute to better $\varepsilon_{\textsc{myp}}$ as overlap
between the choices of the two policies may increase. Note that as ${\cal
	P}_4(\cdot)$ is very different from the rest of the models, the trend of bars
in Fig~\ref{fig:28} with respect to $n$ varies differently from the rest of
the models. 

\begin{figure}[!t]
  \begin{minipage}{0.475\textwidth}
    \centering
    \includegraphics[page=1, width=\linewidth]{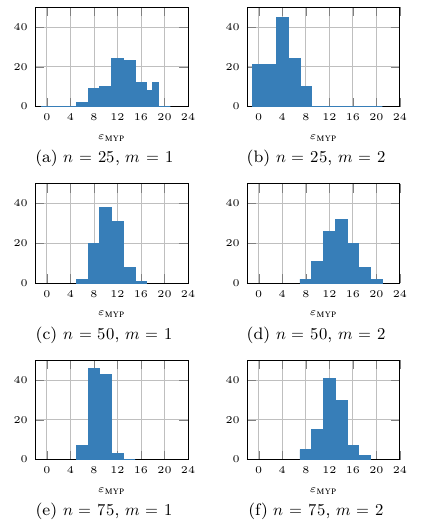}
	\caption
	{Relative improvement $\varepsilon_{\textsc{myp}}$ 
		of \textsc{wip} vs.\ \textsc{myp} 
    for Experiment~$4$.} 
	\label{fig:set9-12}
  \end{minipage}
  \hfill
  \begin{minipage}{0.475\textwidth}
    \centering
    \includegraphics[page=2, width=\linewidth]{improvement.pdf}
	\caption
	{Relative improvement $\varepsilon_{\textsc{myp}}$ 
		of \textsc{wip} vs.\ \textsc{myp}
    for Experiment~$5$.} 
	\label{fig:set12-15}
  \end{minipage}
\end{figure}

\subsubsection*{\textbf{Experiment 4)} Comparison of Whittle index with the myopic policy for randomly sampled models}

We generate $250$ random models as described in Experiment~$2$ but for $|\mathcal{X}|
= 25$ and larger values of $n$. For each case, $\varepsilon_{\textsc{myp}}$ is
computed. The histogram of $\varepsilon_{\textsc{myp}}$ for different choices
of the parameters are shown in Fig~\ref{fig:set9-12}.

The result shows that on average, \textsc{wip} performs considerably better than \textsc{myp} and this improvement is guaranteed as the concentration of data for the sampled models is mostly on positive values of $\varepsilon_{\textsc{myp}}$. 

\subsubsection*{\textbf{Experiment 5)} Comparison of Whittle index with the
myopic policy for restart models.}

We generate $250$ random stochastic matrices for $P(0)$.\footnote{Each row of the matrix is generate according to Section 1.3 of the supplementary material.} We set $|\mathcal{X}| = 25$ and $n \in \{25, 50, 75\}$ and $m \in \{1, 2\}$. For each case, $\varepsilon_{\textsc{myp}}$ is computed and the histogram of $\varepsilon_{\textsc{myp}}$ for different choices of the parameters is shown in Fig~\ref{fig:set12-15}.

\section{Conclusion} \label{sec:conclude}
We present two general sufficient conditions for restless bandit processes to
be indexable. The first condition depends only on the transition matrix $P(1)$
while the second condition depends on both $P(0)$ and $P(1)$. These sufficient
conditions are based on alternative characterizations of the passive set,
which might be useful in general as well.
We also present refinements of these sufficient conditions that are simpler to
verify. Two of these simpler conditions are worth highlighting: models where
the active action resets the state according to a known distribution and
models where the discount factor is less than $0.5$.

We then present a generalization of a previous proposed adaptive greedy algorithm, which was developed to compute the Whittle index for a sub-class of restless bandits known as PCL indexable bandits. We show that the generalized adaptive greedy algorithm computes the Whittle index for all indexable bandits. \textcolor{black}{We provide a computationally efficient implementation of our algorithm, which computes the Whittle indices of a restless bandit with $K$ states in $\OO(K^3)$ computations.}

Finally, we show how to refine the results for two classes for restless
bandits: stochastic monotone bandits and restless bandits with controlled
restarts. We also present a detailed numerical study which shows that Whittle
index policy performs close to the optimal policy and considerably better than
a myopic policy.

\appendix

\section{Proof of Proposition~\ref{prop:characterization}}
\label{app:characterization}

We first present a preliminary result.
\begin{lemma}\label{lemma:h0}
	For $\tau = 0$, the policy $h_0$ satisfies $J^{(h_0)}_\lambda(x) =
    H_\lambda(x, 1) = (1-\beta)c(x, 1)+W_\lambda$.
\end{lemma}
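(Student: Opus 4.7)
The plan is to unfold the definition of $J^{(h_0)}_\lambda(x)$ and recognize that, after the first (active) action, the policy $h_0$ coincides with the optimal policy $g_\lambda$, so the expected continuation cost equals $V_\lambda$ evaluated at the next state. Since $V_\lambda$ is already defined with the $(1-\beta)$ normalization matching the definition of $J^{(g)}_\lambda$, the calculation is purely mechanical.

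Concretely, I would split the infinite sum in the definition of $J^{(h_0)}_\lambda(x)$ at $t=0$:
\[
J^{(h_0)}_\lambda(x) = (1-\beta) c_\lambda(x,1) + (1-\beta)\beta\,\EXP\Bigl[\textstyle\sum_{t=1}^{\infty}\beta^{t-1} c_\lambda(X_t, g_\lambda(X_t)) \,\Big|\, X_0 = x, A_0 = 1\Bigr].
\]
By the tower property and the fact that $h_0$ applies $g_\lambda$ from time $1$ onwards, the inner conditional expectation equals $\EXP[V_\lambda(X_1)\mid X_0 = x, A_0 = 1]/(1-\beta)$, where the $1/(1-\beta)$ cancels the leading factor. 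Writing out the one-step transition gives
\[
J^{(h_0)}_\lambda(x) = (1-\beta) c_\lambda(x,1) + \beta \sum_{y \in \mathcal{X}} P_{xy}(1)\, V_\lambda(y),
\]
which is exactly $H_\lambda(x,1)$ by the definition in~\eqref{eqn:def_H}. This yields the first claimed equality.

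For the second equality, I would simply substitute $c_\lambda(x,1) = c(x,1) + \lambda$ and rearrange:
\[
H_\lambda(x,1) = (1-\beta)c(x,1) + \Bigl[(1-\beta)\lambda + \beta \sum_{y \in \mathcal{X}} P_{xy}(1) V_\lambda(y)\Bigr] = (1-\beta)c(x,1) + W_\lambda(x),
\]
using the definition of $W_\lambda(x)$ in~\eqref{eqn:W_def}. There is no real obstacle here; the lemma is essentially a restatement of the Bellman-type identity for the one-step-lookahead policy that plays active and then follows $g_\lambda$, so the only care needed is to keep the $(1-\beta)$ normalization aligned across $J^{(g)}_\lambda$, $V_\lambda$, $H_\lambda$, and $W_\lambda$.
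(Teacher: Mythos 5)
Your proof is correct and follows essentially the same route as the paper's: both unfold $J^{(h_0)}_\lambda(x)$ one step, use the fact that $h_0$ follows $g_\lambda$ from time $1$ onward so the continuation equals $\beta\sum_{y}P_{xy}(1)V_\lambda(y)$, and then match this to the definitions of $H_\lambda(x,1)$ and $W_\lambda(x)$. Your version merely spells out the $(1-\beta)$ normalization bookkeeping that the paper leaves implicit.
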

\begin{proof}
  Consider the stopping time~$\tau = 0$. The policy~$h_0,$ takes the active
  action at time~$0$ and follows the optimal policy afterwards. Thus, for any
  $x \in \mathcal X$, $J^{(h_0)}(x) = (1-\beta)(c(x,1) + \lambda) + \beta \sum_{y \in
  \mathcal{X}} P_{xy}(1) V_\lambda(y) =  H_\lambda(x, 1)$. By~\eqref{eqn:def_H}
  and~\eqref{eqn:W_def} we have $H_\lambda(x, 1) = (1-\beta)c(x, 1)+W_\lambda(x)$.
\end{proof}

We now proceed with the proof of Proposition~\ref{prop:characterization}.
By definition, $\Pi^{(a)}_\lambda = \Pi_\lambda$. We establish the equality of
other characterizations.

\begin{enumerate}
  \item[(i)] $\Pi^{(a)}_\lambda = \Pi^{(b)}_\lambda$.
    We have
    \(
      x \in \Pi_\lambda \overset{(a)}{\iff} g_\lambda(x) = 0 \overset{(b)}{\iff} H_\lambda(x, 0) < H_\lambda(x, 1)
    \)
    where $(a)$ follows from~\eqref{eqn:pass_set} and $(b)$ follows from the dynamic program~\eqref{eqn:bllmn_vf}. 

  \item[(ii)] $\Pi^{(b)}_\lambda \subseteq \Pi^{(c)}_\lambda$.
    Let $\sigma$ denote the hitting time of $\mathcal{X} \setminus
    \Pi_\lambda$. If we start in state~$x \in \Pi^{(b)}_\lambda =
    \Pi_\lambda$, then the policy~$h_{\sigma, \lambda}$ is same as the optimal policy.
    Hence, $J^{(h_{\sigma, \lambda})}_{\lambda}(x) = H_\lambda(x, 0)$. Thus, for any $x \in \Pi^{(b)}_\lambda = \Pi_\lambda$, 
    $J^{(h_{\sigma, \lambda})}_{\lambda}(x) = H_\lambda(x, 0) \overset{(a)}{<} H_\lambda(x,1) \overset{(b)}{=} J^{(h_0)}_{\lambda}(x)$
    where $(a)$ follows from fact that $x \in \Pi^{(b)}_\lambda$
    and $(b)$ from Lemma~\ref{lemma:h0}.

  \item[(iii)] $\Pi^{(c)}_\lambda \subseteq \Pi^{(b)}_\lambda$.
    Let $x \in \Pi^{(c)}_\lambda$ and $\sigma \in \Sigma$ denote a stopping time such that $J^{(h_{\sigma, \lambda})}_{\lambda}(x) < J^{(h_0)}_{\lambda}(x)$. Now, the optimal policy performs at least as well as policy $h_{\sigma, \lambda}$. Therefore, $V_\lambda(x) \leq J^{(h_{\sigma, \lambda})}_{\lambda}(x)$. Combining this result with Lemma~\ref{lemma:h0} we have $V_\lambda(x) < H_\lambda(x, 1)$. Thus, we must have $V_\lambda(x) = H_\lambda(x, 0)$ which results in $H_\lambda(x, 0) < H_\lambda(x, 1)$ which implies $x \in \Pi^{(b)}_\lambda$.

  \item[(iv)] $\Pi^{(c)}_\lambda = \Pi^{(d)}_\lambda$.
    According to the definitions of $L(x, \tau)$ and $W_\lambda(x)$ we have
    \begin{align}
      J^{(h_{\tau, \lambda})}_{\lambda}(x) = (1-\beta) L(x, \tau) + \EXP [ \beta^{\tau} W_\lambda(X_\tau) | X_0 = x ]. \label{eqn:J_htau}
    \end{align}
    Thus, $J^{(h_{\sigma, \lambda})}_{\lambda}(x) < J^{(h_0)}_{\lambda}(x)$ if and only if
    \begin{align}
      (1-\beta) L(x, \sigma) + \EXP [ \beta^{\sigma} W_\lambda(X_\sigma) | X_0 = x ] < (1 - \beta) c(x, 1) + W_\lambda(x) \label{eqn:n_cond}
    \end{align}
    where we have used~\eqref{eqn:J_htau} for $J^{(h_{\sigma, \lambda})}_{\lambda}(x)$ and
    Lemma~\ref{lemma:h0} for $J^{(h_0)}_{\lambda}(x)$. Rearranging the terms
    of~\eqref{eqn:n_cond} we get the expression in $\Pi^{(d)}_\lambda$. Hence,
    $\Pi^{(c)}_\lambda = \Pi^{(d)}_\lambda$.
\end{enumerate}

\section{Proof of Theorem~\ref{Thm:suf_1}}\label{prf:Thm:suf_1}

\subsection{Proof of Theorem~\ref{Thm:suf_1}.a}

We first present a preliminary result. Let $\Delta_\lambda := \lambda'' - \lambda'$ for any 
\begin{lemma}\label{lemma:W_inclam}
  Under~\eqref{eqn:P1_cond}, for any $\lambda'' > \lambda'$ and $\sigma \in
  \Sigma$, $\sigma \neq 0$, we have that for any $x \in \mathcal X$, 
	\[ 
      W_{\lambda'}(x) - \EXP [ \beta^{\sigma} W_{\lambda'}(X_\sigma) | X_0 = x ]
      \leq  
      W_{\lambda''}(x) - \EXP [ \beta^{\sigma} W_{\lambda''}(X_\sigma) | X_0 = x ], 
    \]
\end{lemma}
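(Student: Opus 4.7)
The plan is to reduce the stopping-time inequality to a one-step (conditional-on-$A_0 = 0$) inequality and then bootstrap via a supermartingale argument. Let $\phi(x) := W_{\lambda''}(x) - W_{\lambda'}(x)$ and $\Delta_\lambda := \lambda'' - \lambda' > 0$. Using the definition of $W_\lambda$ in~\eqref{eqn:W_def}, one computes directly
\[
\phi(x) - \beta \sum_y P_{xy}(0)\, \phi(y) = (1-\beta)^2 \Delta_\lambda + \beta \sum_y \bigl(P_{xy}(1) - \beta Q_{xy}\bigr)\bigl(V_{\lambda''}(y) - V_{\lambda'}(y)\bigr),
\]
where $Q_{xy} := \sum_z P_{xz}(0) P_{zy}(1)$. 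By Lemma~\ref{lem:V-diff}, $\Delta_\lambda N^{(g_{\lambda''})}(y) \leq V_{\lambda''}(y) - V_{\lambda'}(y) \leq \Delta_\lambda N^{(g_{\lambda'})}(y)$, and the same lemma yields $(g_{\lambda'}, g_{\lambda''}) \in \mathcal{H}$. Splitting the signed coefficient $P_{xy}(1) - \beta Q_{xy}$ into its positive and negative parts and using the appropriate bound on $V_{\lambda''} - V_{\lambda'}$ for each, the one-step inequality $\phi(x) \geq \beta \sum_y P_{xy}(0)\,\phi(y)$ reduces to
\[
\sum_y \Bigl\{[\beta Q_{xy} - P_{xy}(1)]^+ N^{(g_{\lambda'})}(y) - [P_{xy}(1) - \beta Q_{xy}]^+ N^{(g_{\lambda''})}(y)\Bigr\} \leq \frac{(1-\beta)^2}{\beta}.
\]

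The key step is to extract this bound from condition~\eqref{eqn:P1_cond}, which involves $P_{zy}(1)$ for individual states $z$ rather than the averaged quantity $Q_{xy}$. For each fixed $y$, define
\[
\psi_y(a) := [a]^+ N^{(g_{\lambda'})}(y) - [-a]^+ N^{(g_{\lambda''})}(y), \quad a \in \mathbb{R}.
\]
Since $\psi_y$ is piecewise linear with left-slope $N^{(g_{\lambda''})}(y)$ and right-slope $N^{(g_{\lambda'})}(y) \geq N^{(g_{\lambda''})}(y)$ at $0$, it is convex. Writing $Q_{xy}$ as the $P_{x\cdot}(0)$-expectation of $P_{zy}(1)$ and applying Jensen's inequality gives
\[
\psi_y\bigl(\beta Q_{xy} - P_{xy}(1)\bigr) \leq \sum_z P_{xz}(0)\, \psi_y\bigl(\beta P_{zy}(1) - P_{xy}(1)\bigr).
\]
Summing over $y$ and interchanging the order of summation, the left-hand side is the quantity in the displayed bound above, while each inner $y$-sum on the right is exactly the LHS of~\eqref{eqn:P1_cond} with $g = g_{\lambda'}$ and $h = g_{\lambda''}$, and is therefore bounded by $(1-\beta)^2/\beta$.

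Finally, with the one-step inequality in hand, the process $M_t := \beta^t \phi(X_t)$ is a supermartingale under the dynamics with $A_t \equiv 0$, since $\phi(X_t) \geq \beta \sum_y P_{X_t, y}(0)\,\phi(y)$. Because $\phi$ is bounded (in the finite-state setting, and more generally whenever $V_\lambda$ is bounded) and $\beta \in (0,1)$, the sequence $M_t$ is uniformly integrable, so optional stopping gives $\phi(x) = \EXP[M_0] \geq \EXP[M_\sigma | X_0 = x] = \EXP[\beta^\sigma \phi(X_\sigma) | X_0 = x, \{A_t = 0\}_{t=0}^{\sigma-1}]$ for every stopping time $\sigma$, which rearranges to the statement of the lemma. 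The main obstacle is the convexity argument in the middle paragraph: one needs to identify the correct convex ``pasting'' function $\psi_y$ whose convexity is driven precisely by the monotonicity of $N^{(g_\lambda)}$ in $\lambda$ from Lemma~\ref{lem:V-diff}, and this is what allows Jensen's inequality to transfer condition~\eqref{eqn:P1_cond} from $P_{zy}(1)$ to the averaged $Q_{xy}$.
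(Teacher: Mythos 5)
Your proof is correct, but it takes a genuinely different route from the paper's. The paper handles the stopping time in one shot: from the definition~\eqref{eqn:W_def} it writes the difference of the two sides of the claimed inequality as $(1-\beta)\Delta_\lambda\bigl(1-M(x,\sigma)\bigr) + \beta\,\EXP\bigl[\sum_{y}\bigl(P_{xy}(1)-\beta^{\sigma}P_{X_\sigma y}(1)\bigr)\bigl(V_{\lambda''}(y)-V_{\lambda'}(y)\bigr)\mid X_0=x\bigr]$, lower-bounds the first term by $\Delta_\lambda(1-\beta)^2$ using $M(x,\sigma)\le\beta$, replaces $\beta^{\sigma}$ by $\beta$ (valid since $\sigma\ge 1$ and $V_{\lambda''}-V_{\lambda'}\ge 0$), and then applies Lemma~\ref{lem:V-diff} followed by~\eqref{eqn:P1_cond} \emph{pointwise} at $z=X_\sigma$ inside the expectation --- no martingale machinery at all. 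You instead prove the one-step inequality $\phi(x)\ge\beta\sum_y P_{xy}(0)\phi(y)$ and iterate it via optional stopping; the price is that~\eqref{eqn:P1_cond} must be transferred from the pointwise kernel $P_{z\cdot}(1)$ to the passive-averaged kernel $Q=P(0)P(1)$, which your Jensen argument with the convex pasting $\psi_y$ accomplishes cleanly (the convexity of $\psi_y$ being exactly the monotonicity $N^{(g_{\lambda'})}\ge N^{(g_{\lambda''})}$ guaranteed by Lemma~\ref{lem:V-diff}, which also puts $(g_{\lambda'},g_{\lambda''})\in\mathcal H$ as needed). Both proofs invoke~\eqref{eqn:P1_cond} with the same pair $(g,h)=(g_{\lambda'},g_{\lambda''})$, and your algebraic identity for $\phi(x)-\beta\sum_y P_{xy}(0)\phi(y)$ checks out. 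Your route is slightly longer and requires the boundedness/uniform-integrability caveat for optional stopping (automatic for finite $\mathcal X$, which is the paper's main setting), but it buys something the paper's argument does not make visible: the lemma already follows from the weaker, averaged version of~\eqref{eqn:P1_cond} in which $\beta P_{zy}(1)$ is replaced by $\beta\sum_{z}P_{xz}(0)P_{zy}(1)$.
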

\begin{proof}
  By~\eqref{eqn:W_def}, we have for any $x \in \mathcal{X}$,
  \begin{multline}
    (W_{\lambda''}(x) - \EXP [ \beta^{\sigma} W_{\lambda''}(X_\sigma) | X_0 = x]) 
    -
    (W_{\lambda'}(x) - \EXP [ \beta^{\sigma} W_{\lambda'}(X_\sigma) | X_0 = x]) 
    \\
    = (1-\beta) \Delta_\lambda \bigl(1 - M(x, \sigma)\bigr) 
    + \beta \EXP\biggl[ \sum_{y \in \mathcal{X}} \bigl( P_{xy}(1) -
      \beta^\sigma P_{X_\sigma y}(1) \bigr) \bigl( V_{\lambda''}(y) -
    V_{\lambda'}(y) \bigr) \biggm| X_0 = x \biggr]
    \label{eq:split-1}
  \end{multline}
  Now since $\sigma \ge 1$, $M(x,\sigma) \le \beta$ and, 
  \begin{equation} \label{eq:split-1-1}
    (1-\beta) \Delta_\lambda(1 - M(x,\sigma)) \ge 
    \Delta_\lambda (1-\beta)^2
  \end{equation}
  Now consider,
  \begin{align}
    \hskip 1em & \hskip -1em
    \beta \EXP\biggl[ \sum_{y \in \mathcal{X}} \bigl( P_{xy}(1) -
      \beta^\sigma P_{X_\sigma y}(1) \bigr) \bigl( V_{\lambda''}(y) -
    V_{\lambda'}(y) \bigr) \biggm| X_0 = x \biggr] \notag \\
    & \stackrel{(a)}{\geq} \beta \EXP\biggl[
      \sum_{y \in \mathcal{X}} \bigl( P_{xy}(1) - \beta P_{X_\sigma y}(1) \bigr)
    \bigl( V_{\lambda''}(y) - V_{\lambda'}(y) \bigr) \biggm| X_0 = x \biggr]
    \notag \\
    & \stackrel{(b)}{\geq} \beta \Delta_\lambda \EXP \biggl[ 
      \sum_{y \in \mathcal{X}} \Bigl\{ 
        \bigl[ P_{xy}(1) - \beta P_{X_\sigma y}(1) \bigr]^{+} N^{(g_{\lambda''})}(y) 
        \notag \\
        & \hskip 6em
        + \bigl[ P_{xy}(1) - \beta P_{X_\sigma y}(1) \bigr]^{-}
    N^{(g_{\lambda'})}(y) \Bigr\} \bigg| X_0 = x \biggr] 
    \notag \\
    &\stackrel{(c)}\ge - \Delta_\lambda (1-\beta)^2,
    \label{eq:split-1-2}
  \end{align}
  where $(a)$ holds due to $\sigma \geq 1$ and $(b)$ holds by
  Lemma~\ref{lem:V-diff} and~$(c)$ follows from~\eqref{eqn:P1_cond}.
  Substituting~\eqref{eq:split-1-1} and~\eqref{eq:split-1-2}
  in~\eqref{eq:split-1}, we get the result of the Lemma.
\end{proof}

We now proceed with the proof of Theorem~\ref{Thm:suf_1}a. 
Consider $\lambda' < \lambda''$. Suppose $x \in \Pi_{\lambda'}$. By
Proposition~\ref{prop:characterization}.d, there exists a $\sigma \neq 0$ such that
$(1-\beta) \left( L(x, \sigma) - c(x, 1) \right) < W_{\lambda'}(x) - \EXP
[ \beta^{\sigma} W_{\lambda'}(X_\sigma) | X_0 = x ]. $ Combining this
result with the result of Lemma~\ref{lemma:W_inclam}, we infer 
$(1-\beta)
\left( L(x, \sigma) - c(x, 1) \right) < W_{\lambda''}(x) - \EXP [
\beta^{\sigma} W_{\lambda''}(X_\sigma) | X_0 = x ].$
Thus, $x \in \Pi_{\lambda''}$. Hence, $\Pi_{\lambda'}\subseteq \Pi_{\lambda''}$ and the
RB is indexable.

\subsection{Proof of Theorem~\ref{Thm:suf_1}.b} \label{prf:Thm:suf_2}
Consider $\lambda' < \lambda''$. A RB is indexable if
$\Pi_{\lambda'}\subseteq \Pi_{\lambda''}$ or equivalently, for any $x$ such
that $H_{\lambda'}(x, 0) < H_{\lambda'}(x, 1)$ then $H_{\lambda''}(x, 0) <
H_{\lambda''}(x, 1)$. A sufficient condition for that is to show that
$H_{\lambda'}(x, 1) - H_{\lambda'}(x, 0) \leq H_{\lambda''}(x, 1) -
H_{\lambda''}(x, 0)$, or equivalently, show that $H_{\lambda''}(x, 0) -
H_{\lambda'}(x, 0) \leq H_{\lambda''}(x, 1) - H_{\lambda'}(x, 1)$. We prove
this inequality as follows.

Let $\Delta_\lambda = \lambda'' - \lambda'$. By~\eqref{eqn:def_H}, we have for any $x \in \mathcal{X}$,
\begin{align*}
  \hskip 1em & \hskip -1em	
  \left(H_{\lambda''}(x, 1) - H_{\lambda'}(x, 1) \right) - \left(H_{\lambda''}(x, 0) - H_{\lambda'}(x, 0) \right) \notag \\ 
	& = \Delta_\lambda(1-\beta) + \beta \sum_{y \in \mathcal{X}} (P_{xy}(1) - P_{xy}(0)) (V_{\lambda''}(y) - V_{\lambda'}(y)) \notag \\
	& \stackrel{(a)}{\geq} \Delta_\lambda \Bigl( 1-\beta + \beta \sum_{y \in
    \mathcal{X}} \left[ P_{xy}(1) - P_{xy}(0) \right]^{+} N^{(g_{\lambda''})}(y)
  + \left[ P_{xy}(1) - P_{xy}(0) \right]^{-} N^{(g_{\lambda'})}(y) \Bigr)
	\stackrel{(b)}{\geq} 0 
\end{align*}
where $(a)$ follows from Lemma~\ref{lem:V-diff} and $(b)$ holds
by~\eqref{eqn:P2_cond}. Therefore the RB is indexable.

\section{Proof of Proposition~\ref{prop:1}}
\label{prf:prop:1}

We prove the result of each part separately. 
\begin{enumerate}
	\item[a.] This follows from observing that 
      \begin{align*}
        \hskip 1em & \hskip -1em	
        \sum_{y \in \mathcal{X}}
        \Bigl\{
          \bigl[ \beta P_{zy}(1) - P_{xy}(1) \bigr]^+ N^{(g)}(y) 
          -
          \bigl[ P_{xy}(1) - \beta P_{zy}(1) \bigr]^+ N^{(h)}(y) 
        \Bigr\}
        \\
        &\stackrel{(a)}\le 
        \sum_{y \in \mathcal{X}}
          \bigl[ \beta P_{zy}(1) - P_{xy}(1) \bigr]^+ N^{(g)}(y) 
        \\
        &\stackrel{(b)}\le 
        \sum_{y \in \mathcal{X}}
          \bigl[ \beta P_{zy}(1) - P_{xy}(1) \bigr]^+ 
        \le 
        \max_{x,z \in \mathcal{X}}
        \sum_{y \in \mathcal{X}}
          \bigl[ \beta P_{zy}(1) - P_{xy}(1) \bigr]^+ 
      \end{align*}
      where we are ignoring negative terms in $(a)$ and using 
      $N^{(g)}(x) \le 1$ in~$(b)$.

    \item[b.] For any $x,y,z \in \mathcal{X}$, 
      $P_{xy}(1) - \beta P_{zy}(1) = (1-\beta) P_{xy}(1)$. Thus,
      \begin{align*}
        \hskip 1em & \hskip -1em	
        \sum_{y \in \mathcal{X}}
        \Bigl\{
          \bigl[ \beta P_{zy}(1) - P_{xy}(1) \bigr]^+ N^{(g)}(y) 
          -
          \bigl[ P_{xy}(1) - \beta P_{zy}(1) \bigr]^+ N^{(h)}(y) 
        \Bigr\}
        \\
        &= -
        \sum_{y \in \mathcal{X}}
        (1-\beta) P_{xy}(1) N^{(h)}(y) 
        \le 0 < \frac{(1-\beta)^2}{\beta}.
      \end{align*}

    \item[c.] This follows from observing that
      \begin{align*}
        \hskip 1em & \hskip -1em	
        \sum_{y \in \mathcal{X}} 
        \Bigl\{
          \bigl[ P_{xy}(0) - P_{xy}(1) \bigr]^+ N^{(g)}(y) 
          -
          \bigl[ P_{xy}(1) - P_{xy}(0) \bigr]^+ N^{(h)}(y) 
        \Bigr\}
        \\
        &\stackrel{(a)}\le
        \sum_{y \in \mathcal{X}} 
          \bigl[ P_{xy}(0) - P_{xy}(1) \bigr]^+ N^{(g)}(y) 
        \\
        &\stackrel{(b)}\le
        \sum_{y \in \mathcal{X}} 
          \bigl[ P_{xy}(0) - P_{xy}(1) \bigr]^+ 
        \le
        \max_{x\in \mathcal{X}}
        \sum_{y \in \mathcal{X}} 
          \bigl[ P_{xy}(0) - P_{xy}(1) \bigr]^+ 
      \end{align*}
      where we are ignoring negative terms in $(a)$ and using 
      $N^{(g)}(x) \le 1$ in~$(b)$.

    \item[d.] $\beta \le 0.5$ implies that
      \[
        \dfrac{1-\beta}{\beta} \ge 1 \ge \max_{x \in \mathcal X}
          \bigl[ P_{xy}(0) - P_{xy}(1) \bigr]^+ 
      \]
      which is the same as sufficient condition~(c) established above.
\end{enumerate}

\section{Proof of Lemma~\ref{lemma:WJ}}\label{prf:lemma:WJ}
\textcolor{black}{The proof of each part is as follows:
\begin{enumerate}
	\item Since the model is indexable and \textcolor{black}{$y \in \Gamma_{d+1}$}, $w(d) = \lambda_{d+1}$. Therefore, the optimal policy is indifferent between choosing the active and the passive action at $\lambda = \lambda_{d+1}$.
	\item By definition, for any $\lambda \in (\lambda_d, \lambda_{d+1}]$, $h_d$ is an optimal policy. Therefore, we have $J^{({h}_{d, y})}_\lambda(x) \geq J^{(h_d)}_\lambda(x)$ with $y \in {\cal X}\backslash{\cal P}_d$, for all $x \in {\cal X}$ with equality if \textcolor{black}{$y \in \Gamma_{d+1}$} and $\lambda = \lambda_{d+1}$. 
\end{enumerate}}

\bibliographystyle{APT}
\bibliography{mybibfile}

\begin{thebibliography}{10}

\bibitem{abad2016near}
{\sc Abad, C. and Iyengar, G.} (2016).
\newblock A near-optimal maintenance policy for automated {DR} devices.
\newblock {\bf 7,} 1411--1419.

\bibitem{akbarzadeh2019dynamic}
{\sc Akbarzadeh, N. and Mahajan, A.} (2019).
\newblock Dynamic spectrum access under partial observations: A restless bandit
  approach.
\newblock In {\em Canadian Workshop on Information Theory}.
\newblock IEEE.
\newblock pp.~1--6.

\bibitem{akbarzadeh2019restless}
{\sc Akbarzadeh, N. and Mahajan, A.} (2019).
\newblock Restless bandits with controlled restarts: Indexability and
  computation of whittle index.
\newblock In {\em Conf. Decision Control}.
\newblock pp.~7294--7300.

\bibitem{ansell2003whittle}
{\sc Ansell, P.~S., Glazebrook, K.~D., Ni{\~n}o-Mora, J. and O'Keeffe, M.}
  (2003).
\newblock Whittle's index policy for a multi-class queueing system with convex
  holding costs.
\newblock {\em Math. Operat. Res.\/} {\bf 57,} 21--39.

\bibitem{archibald2009indexability}
{\sc Archibald, T.~W., Black, D.~P. and Glazebrook, K.~D.} (2009).
\newblock Indexability and index heuristics for a simple class of inventory
  routing problems.
\newblock {\em Operat. Res.\/} {\bf 57,} 314--326.

\bibitem{avrachenkov2013congestion}
{\sc Avrachenkov, K., Ayesta, U., Doncel, J. and Jacko, P.} (2013).
\newblock Congestion control of {TCP} flows in internet routers by means of
  index policy.
\newblock {\em Computer Networks\/} {\bf 57,} 3463--3478.

\bibitem{ayesta2010modeling}
{\sc Ayesta, U., Erausquin, M. and Jacko, P.} (2010).
\newblock A modeling framework for optimizing the flow-level scheduling with
  time-varying channels.
\newblock {\em Performance Evaluation\/} {\bf 67,} 1014--1029.

\bibitem{bertsimas1996conservation}
{\sc Bertsimas, D. and Ni{\~n}o-Mora, J.} (1996).
\newblock Conservation laws, extended polymatroids and multiarmed bandit
  problems; a polyhedral approach to indexable systems.
\newblock {\em Math. Operat. Res.\/} {\bf 21,} 257--306.

\bibitem{chakravorty2014multi}
{\sc Chakravorty, J. and Mahajan, A.} (2014).
\newblock Multi-armed bandits, gittins index, and its calculation.
\newblock {\em Methods and applications of statistics in clinical trials:
  Planning, analysis, and inferential methods\/} {\bf 2,} 416--435.

\bibitem{deo2013improving}
{\sc Deo, S., Iravani, S., Jiang, T., Smilowitz, K. and Samuelson, S.} (2013).
\newblock Improving health outcomes through better capacity allocation in a
  community-based chronic care model.
\newblock {\em Operat. Res.\/} {\bf 61,} 1277--1294.

\bibitem{egidi2006sherman}
{\sc Egidi, N. and Maponi, P.} (2006).
\newblock A {Sherman}--{Morrison} approach to the solution of linear systems.
\newblock {\em Journal of computational and applied mathematics\/} {\bf 189,}
  703--718.

\bibitem{feinberg2020complexity}
{\sc Feinberg, E.~A. and He, G.} (2020).
\newblock Complexity bounds for approximately solving discounted {MDPs} by
  value iterations.
\newblock {\em Operations Research Letters\/} {\bf 48,} 543--548.

\bibitem{gittins2011multi}
{\sc Gittins, J., Glazebrook, K. and Weber, R.} (2011).
\newblock {\em Multi-armed bandit allocation indices}.
\newblock John Wiley \& Sons.

\bibitem{gittins1979bandit}
{\sc Gittins, J.~C.} (1979).
\newblock Bandit processes and dynamic allocation indices.
\newblock {\em Journal of the Royal Statistical Society. Series B\/} 148--177.

\bibitem{glazebrook2013monotone}
{\sc Glazebrook, K., Hodge, D. and Kirkbride, C.} (2013).
\newblock Monotone policies and indexability for bidirectional restless
  bandits.
\newblock {\em Adv. Appl. Prob.\/} {\bf 45,} 51--85.

\bibitem{glazebrook2002index}
{\sc Glazebrook, K. and Mitchell, H.} (2002).
\newblock An index policy for a stochastic scheduling model with
  improving/deteriorating jobs.
\newblock {\em Naval Research Logistics\/} {\bf 49,} 706--721.

\bibitem{glazebrook2009index}
{\sc Glazebrook, K.~D., Kirkbride, C. and Ouenniche, J.} (2009).
\newblock Index policies for the admission control and routing of impatient
  customers to heterogeneous service stations.
\newblock {\em Operat. Res.\/} {\bf 57,} 975--989.

\bibitem{glazebrook2005index}
{\sc Glazebrook, K.~D., Mitchell, H.~M. and Ansell, P.~S.} (2005).
\newblock Index policies for the maintenance of a collection of machines by a
  set of repairmen.
\newblock {\em Euro. J. Operat. Res.\/} {\bf 165,} 267--284.

\bibitem{glazebrook2006some}
{\sc Glazebrook, K.~D., Ruiz-Hernandez, D. and Kirkbride, C.} (2006).
\newblock Some indexable families of restless bandit problems.
\newblock {\em Adv. Appl. Prob.\/} {\bf 38,} 643--672.

\bibitem{Jacko2012}
{\sc Jacko, P.} (2012).
\newblock Optimal index rules for single resource allocation to stochastic
  dynamic competitors.
\newblock In {\em Conf. Perf. Eval. Method. Tools}.
\newblock {ACM}.

\bibitem{liu2010indexability}
{\sc Liu, K. and Zhao, Q.} (2010).
\newblock Indexability of restless bandit problems and optimality of whittle
  index for dynamic multichannel access.
\newblock {\bf 56,} 5547--5567.

\bibitem{lott2000optimality}
{\sc Lott, C. and Teneketzis, D.} (2000).
\newblock On the optimality of an index rule in multichannel allocation for
  single-hop mobile networks with multiple service classes.
\newblock {\em Probability in the Engineering and Informational Sciences\/}
  {\bf 14,} 259--297.

\bibitem{nino2001restless}
{\sc Ni{\~n}o-Mora, J.} (2001).
\newblock Restless bandits, partial conservation laws and indexability.
\newblock {\em Adv. Appl. Prob.\/} {\bf 33,} 76--98.

\bibitem{nino2002dynamic}
{\sc Nino-Mora, J.} (2002).
\newblock Dynamic allocation indices for restless projects and queueing
  admission control: a polyhedral approach.
\newblock {\em Mathematical programming\/} {\bf 93,} 361--413.

\bibitem{nino2007dynamic}
{\sc Ni{\~n}o-Mora, J.} (2007).
\newblock Dynamic priority allocation via restless bandit marginal productivity
  indices.
\newblock {\em TOP\/} {\bf 15,} 161--198.

\bibitem{ninomora2006}
{\sc Niño-Mora, J.} (2006).
\newblock Restless bandit marginal productivity indices, diminishing returns,
  and optimal control of make-to-order/make-to-stock {M/G/1} queues.
\newblock {\em Math. Operat. Res.\/} {\bf 31,} 50--84.

\bibitem{papadimitriou1999complexity}
{\sc Papadimitriou, C.~H. and Tsitsiklis, J.~N.} (1999).
\newblock The complexity of optimal queuing network control.
\newblock {\em Math. Operat. Res.\/} {\bf 24,} 293--305.

\bibitem{puterman2014markov}
{\sc Puterman, M.~L.} (2014).
\newblock {\em {Markov} decision processes: discrete stochastic dynamic
  programming}.
\newblock John Wiley \& Sons.

\bibitem{qian2016restless}
{\sc Qian, Y., Zhang, C., Krishnamachari, B. and Tambe, M.} (2016).
\newblock Restless poachers: Handling exploration-exploitation tradeoffs in
  security domains.
\newblock In {\em Int. Conf. on Autonomous Agents \& Multiagent Systems}.
\newblock pp.~123--131.

\bibitem{wang2019whittle}
{\sc {Wang}, J., {Ren}, X., {Mo}, Y. and {Shi}, L.} (2020).
\newblock Whittle index policy for dynamic multichannel allocation in remote
  state estimation.
\newblock {\em IEEE Transactions on Automatic Control\/} {\bf 65,} 591--603.

\bibitem{weber1990index}
{\sc Weber, R.~R. and Weiss, G.} (1990).
\newblock On an index policy for restless bandits.
\newblock {\em J. Appl. Prob.\/} {\bf 27,} 637--648.

\bibitem{whittle1988restless}
{\sc Whittle, P.} (1988).
\newblock Restless bandits: Activity allocation in a changing world.
\newblock {\em J. Appl. Prob.\/} {\bf 25,} 287--298.

\bibitem{yu2018deadline}
{\sc Yu, Z., Xu, Y. and Tong, L.} (2018).
\newblock Deadline scheduling as restless bandits.
\newblock {\bf 63,} 2343--2358.

\end{thebibliography}

\end{document}